\documentclass[11pt,preprintnumbers,pra,aps,nofootinbib,
tightenlines,floatfix,superscriptaddress,longbibliography,
twocolumn]{revtex4-2}
\usepackage[margin=1in]{geometry}
\usepackage[dvipsnames]{xcolor}
\usepackage{amsmath, amssymb, amsthm}
\usepackage{mathrsfs}      
\usepackage{enumerate}     
\usepackage[colorlinks=true,breaklinks=true]{hyperref}
\hypersetup{
linkcolor = TealBlue!90!Black,
citecolor = YellowGreen!90!Black,
urlcolor = Purple!90!Black,
colorlinks = true,
}
\usepackage{tikz}          
\usepackage{graphicx}      
\usepackage{mathtools} 
\usepackage{booktabs}
\usepackage{babel}
\usepackage[normalem]{ulem}
\usepackage{multirow}
\usepackage[section]{placeins}
\theoremstyle{plain}
\newtheorem{theorem}{Theorem}[section]   
\newtheorem{lemma}[theorem]{Lemma}

\newtheorem{corollary}[theorem]{Corollary}

\theoremstyle{definition}

\theoremstyle{remark}

\begin{document}
\title{Susceptibility of qudit-qudit entanglement to quantum noise: insights from the negativity rank}

\author{Matthew Chen}
\email{matthew.chen@duke.edu}
\affiliation{{Duke Quantum Center and Department of Physics, Duke University, Durham, NC 27708, USA}}
\author{Natalie Klco}
\email{natalie.klco@duke.edu}
\affiliation{{Duke Quantum Center and Department of Physics, Duke University, Durham, NC 27708, USA}}
\begin{abstract}
     One aspect of complexity for quantum many-body systems resides in their ability to exhibit entanglement, uniquely quantum correlations central to quantum computation.
To quantitatively explore the diminishing effect that quantum noise has on entanglement resources, we focus on the negativity witness, an entanglement measure based on local time reversal transformations that provides a perfect witness in simple physical contexts.
Beyond calculation of the negativity, we analyze the associated eigenvalue spectra of partially transposed density matrices for qudit-qudit bipartitions, finding that higher negativity rank provides greater(reduced) robustness to targeted(depolarizing) noise.
In particular, we extend the maximum number of separable states that can be arbitrarily mixed into an entangled pure ensemble while retaining negativity to be exactly equal to the original negativity rank.
Furthermore, we report how understanding the role of the negativity rank with respect to the completely depolarizing channel allows direct calculation of (1) certain noise boundaries for the Absolutely Positive Partial Transpose (APPT) subspace beyond which no unitary operator is capable of generating negativity and (2) the maximum qutrit-qutrit APPT purity.
Beyond proving tight bounds and contributing to the ongoing search for APPT entanglement, these results expand our understanding of the nuanced relationship between computation with entanglement and practical environments for operating quantum devices.
\end{abstract}

\date{\today}
\maketitle

{
\small
\tableofcontents
}
\section{Introduction}
Entanglement is a crucial resource in quantum information processing~\cite{PhysRevLett.70.1895, Huelga:1997mw, doi:10.1126/science.1097576, Seif:2024pgs, Bravyi:2025ukf} and provides insight into fundamental correlations within quantum many-body systems and quantum fields~\cite{Reeh:1961ujh, Summers:1987ze, Eisert:2006zz, Hastings:2007iok,  2008PhRvL.101a0504L, Casini:2009sr,   Laflorencie:2015eck, Witten:2018zxz, Klco:2021cxq, Joshi:2023rvd}. Unfortunately, when quantum devices inevitably interact with their environment, the resulting quantum noise can decrease their available entanglement and information complexity~\cite{Zyczkowski:2001hut, Dodd:2003jwv, Yu:2004cga}. Thus, understanding how structural properties of entanglement impact its relationship with various forms of quantum noise is valuable for endeavors to realize the envisioned capabilities of quantum computation. 

For pure states, the entanglement entropy (and a variety of other entanglement measures) provides a necessary and sufficient condition for determining whether a state is entangled or separable across a bipartition~\cite{Horodecki:2009zz}. This is possible because any entropy observed in a subspace of a pure state must arise due to entanglement between the subspace and its complement.
However, for mixed ensembles, subspace entropy could also arise from the classical distribution inherent in the global ensemble itself.
This is one perspective from which to appreciate a fundamental challenge of entanglement detection---any conclusive protocol that has been developed for identifying entanglement between bipartitions of qudit-qudit mixed ensembles is generally computationally infeasible, even in small (e.g., qutrit-qutrit) systems~\cite{HORODECKI19961, Doherty:2003pfm}.
Consequently, many theoretical approaches for imperfectly witnessing entanglement within mixed ensembles have been developed~\cite{ Vidal:1998ch, Sanpera:2000evh, Guhne:2008qic, 10.1145/1993636.1993683, SHI201551, Harrow:2017uji}, providing tradeoffs between computational time and sensitivity.
These approaches are often based on positive but not completely positive (PnCP) maps---linear maps acting on density matrices that preserve positive semidefiniteness when acting globally but may introduce negative eigenvalues when acting locally if entanglement is present. For example, one necessary and sufficient criterion states that a density matrix, $\rho$, is separable if and only if $(\mathbb{I} \otimes \Phi)\rho\geq0$ for all PnCP maps, $\Phi$, where $\mathbb{I}$ is the identity map \cite{HORODECKI19961}. The general infeasibility of this criterion then resides in the open problem of characterizing all PnCP maps in arbitrary dimensions~\cite{Miller:2015hqi}. 

In the following, we focus on the noise robustness information conveyed by a single PnCP map---the transpose $\Phi = T$.  
While this map alone is not necessary and sufficient for witnessing entanglement in generic quantum systems, it does exhibit this significance for small Hilbert spaces (qubit-qubit and qubit-qutrit systems~\cite{HORODECKI19961}) and for simple physical contexts such as spacelike vacuum regions of free scalar fields~\cite{Klco:2021biu, Gao:2024nbl}.  
Furthermore, the absolutely separable (ASEP) subspace---in which no unitary operator is capable of generating entanglement---is equivalent to the absolutely positive partial transpose (APPT) subspace for qubit-qudit~\cite{Johnston:2013wey}  and possibly larger~\cite{Arunachalam:2015ywm} systems,
i.e., APPT is necessary and sufficient for ASEP classification in system sizes where the positive partial transpose (PPT) is no longer sufficient for separability.
Along with these properties of heightened significance, the associated negativity~\cite{Peres:1996dw} of the transpose map---the sum of negative eigenvalues in the partially transposed density matrix $\left( \mathbb{I} \otimes T\right) \rho \equiv \rho^\Gamma$---provides an entanglement measure that has been well studied for its probability distribution across pure states and mixed ensembles~\cite{Datta:2010mda, Shapourian:2020mkc}, its relation to distillibility~\cite{Dur:1999dek, Vidal:2002zz}, as well as its convexity~\cite{Plenio:2005cwa}. Here, we consider a further property, the \emph{negativity rank} $n_-$ defined as the number of negative eigenvalues in $\rho^\Gamma$.  
Beyond $n_->0$ heralding negative partial transpose (NPT) entanglement, this spanning dimension of negativity contributions in the PT space is found to relate to the retention of inseparability in a variety of noisy environments.

This manuscript explores the operational information of response to quantum noise that is conveyed by the negativity rank of an ensemble.
In some cases this means analytic derivations of negativity robustness (\S\ref{sec:negativityDecay},~\ref{sec: IV}), developing intuitive noise-based arguments bounding transitions in the unitary availability of negativity (\S\ref{sec:APPTboundary}), or appreciating the insights and noise implications of known negativity rank bounds (\S\ref{sec:NPTsubspaces}).
In other cases, this means contributing an analytic purity bound calculation  and exploring the impact of recent developments in generating PnCP maps sensitive to PPT entanglement~\cite{CHOI197595, Breuer:2006azi,Hall_2006, Miller:2015hqi, Buckley:2021isn, Masse:2026vps} to deepen appreciation for outstanding problems in mixed-state entanglement classification (\S\ref{sec:maxPurityAPPT33}).

Section~\ref{sec:negativityDecay} begins by establishing the role of negativity rank in governing the rate of negativity decay when subject to the global depolarizing noise channel.
Employing this understanding, Section~\ref{sec:APPTboundary} develops an intuitive derivation for the Bloch radius of the APPT boundary for ensembles radially connected to the Bloch surface, and Section~\ref{sec:maxPurityAPPT33} analytically derives the maximum purity of a qutrit-qutrit APPT state.  Beyond a fundamental result, the latter is shown to provide a computationally amenable softening of constraints for a semidefinite program that may complement the ongoing search for APPT entangled states.
As the relation of Section~\ref{sec:negativityDecay} heightens the practical role of the negativity rank, Section~\ref{sec:NPTsubspaces} recounts an analytic proof of the maximum negativity rank and an explicit maximal NPT subspace $\mathcal{S}$ as a function of the bipartite qudit-qudit Hilbert space dimensions~\cite{PhysRevA.87.064302}.  
In the process of appreciating this derivation, Section~\ref{sec:NPTsubspaces} provides a modicum of additional guidance on efficiently identifying support of negative eigenvalues in $\rho^\Gamma$ for $\rho \in \mathcal{S}$.
Because the maximal negativity rank of mixed ensembles is greater than that for pure states of the same Hilbert space dimensions,  Section~\ref{sec:NPTsubspaces} closes with a discussion of the ability of noise channels to increase the negativity rank, though they do not increase the negativity.
Complementary to Section~\ref{sec:negativityDecay}, Section~\ref{sec: IV} derives 
tight bounds on the number of arbitrary separable (\S\ref{sec:sepmix}) or entangled (\S\ref{sec:entmix}) states that can be mixed into an originally pure NPT ensemble while preserving the NPT entanglement classification.
In particular, the proof for separable mixtures extends the previously known lower bound and connects the exact bound directly to the negativity rank.

\section{Robustness to Depolarization}
\label{sec: II}
\subsection{Negativity Decay Rate}\label{sec:negativityDecay}
The global depolarizing noise channel introduces, with probability $0 \leq p \leq 1$, an unpolarized source into an ensemble, such that a two-qudit density matrix $\rho$ is updated as,
\begin{align}
\label{eq: depolarizingnoise}
    \mathcal{E}_p(\rho)=(1-p)\rho+\frac{p\mathbb{I}}{d^2} \ \ \ ,
\end{align}
where $d^2$ is the dimension of the Hilbert space and $\rho$ is an element in the set of $d^2 \times d^2$ trace-1 Hermitian positive semidefinite matrices. Beyond its theoretical simplicity, the depolarizing channel is relevant in practical applications that leverage twirling techniques~\cite{Zyczkowski:2005jvo, Dankert:2009yux} to globally depolarize otherwise biased sources of noise.
In terms of the Bloch vector $\mathbf{r}$ of $(d^4-1)$ scalar coefficients governing the generalized Pauli $\left( \boldsymbol{\sigma}\right)$ decomposition of a density matrix $\rho(\mathbf{r}) = \frac{\mathbb{I} + \mathbf{r}\cdot{\boldsymbol\sigma}}{d^2}$, the depolarizing channel simply shrinks this vectorization of the ensemble, $\mathcal{E}_p\left( \rho\left( \mathbf{r} \right) \right) = \rho \left( \mathbf{r}(1-p)\right)$.

For quantifying entanglement, this simple rescaling of $\mathbf{r}$ does not extend to the negativity $\mathcal{N}\left(\mathbf{r}\right)$, due to its non-trivial functional relationship with the direction of $\mathbf{r}$.
Rather, the evolution of negativity upon global depolarization is more naturally expressed recursively.
Because the partial transpose (PT) is a linear map, $\langle ij|\rho|k\ell\rangle = \langle i\ell| \rho^\Gamma | kj\rangle$, and the maximally mixed density matrix is invariant under partial transposition, $\left(\mathbb{I}/d^2\right)^\Gamma = \mathbb{I}/d^2$, it is equivalent to partially transpose before or after the global depolarizing noise channel,
\begin{equation}
\mathcal{E}_p\left( \rho\right)^\Gamma = \mathcal{E}_p\left( \rho^\Gamma\right)    \ \ \ .
\end{equation}
Since the negativity is the sum of negative eigenvalues in the spectrum of $\rho^\Gamma$, 
\begin{equation}
    \mathcal{N}_i=(1-p)\mathcal{N}_{i-1}-\frac{pn_-}{d^2}  \ \ \ ,
    \label{eq:recurrence}
\end{equation}
valid when $p$ is sufficiently small that $n_-\left( \rho^\Gamma \right) = n_-\left( \mathcal{E}_p\left(\rho^\Gamma\right) \right)$, i.e., when the negativity rank---the number of negative PT eigenvalues---is unchanged.  
Thus, the rate of negativity reduction under depolarizing noise is governed by its rank $n_-$, with higher rank associated with more rapid decay.  For the  continuous depolarizing channel, a qudit pair would experience discrete softening of the negativity decay rate as $n_-$ crosses thresholds of integer decrements.

\subsection{Qudit-Qudit APPT Boundary}
\label{sec:APPTboundary}

For pure states, any separable state can be transformed to be maximally entangled by application of an appropriate entangling unitary.
Mixed ensembles, however, exhibit constrained entanglement generation~\cite{Abellanet-Vidal:2026cij}.
In fact, there exists an absolutely separable (ASEP) subspace for which no unitary can produce entanglement~\cite{Zyczkowski:1998yd,Gurvits:2002yud,Johnston:2013wey,Abellanet-Vidal:2024jnc}.
A convenient stepping stone to characterizing the ASEP subspace is the absolutely PPT (APPT~\footnote{Because unitary transformations leave spectra invariant, these subspaces are also referred to as separable from spectrum (SEPFS) and PPT from spectrum (PPTFS).}) subspace, in which no unitary can produce a form of entanglement that violates the PPT condition~\cite{Hildebrand:2005eni,Arunachalam:2015ywm}.
This convenience is due to the presence of a necessary and sufficient condition in terms of matrix inequalities that are computable in small dimensions for determining membership within the APPT subspace~\cite{Hildebrand:2005eni}. 
As the boundaries of ASEP and APPT subspaces mark distinct changes in the quality of entanglement available for information processing, it is valuable to know how much depolarization a quantum computation can withstand before crossing them.

\begin{figure}
\includegraphics[width=0.5\textwidth]{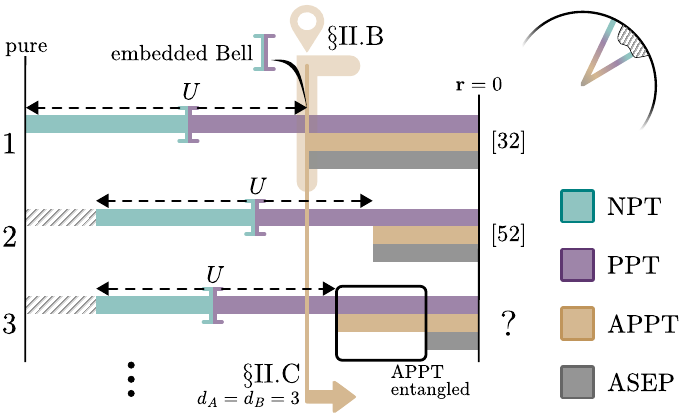}
\caption{Depolarized trajectories of entanglement classifications in the two-qudit ($d^4-1$)-dimensional Bloch sphere of radius $\left|\mathbf{r}_{\text{pure}}\right| = \sqrt{d^2-1}$ for pure states and for mixed ensembles that are radially inaccessible from the surface. The $U$ indicates that unitarily related trajectories exhibit different radii of their NPT/PPT boundaries. Type 1 trajectories depict the technique carried out in Section~\ref{sec:APPTboundary} to find the APPT boundary for the eigenspectrum associated with pure states and its corresponding depolarized trajectories. This technique can be generalized for any trajectory such as the type 2 trajectory described in Ref.~\cite{Song:2024svl}, which is an extremal point of the APPT subspace. In qutrit-qutrit systems, type 1 trajectories also saturate the upperbound on the purity of an ensemble belonging to the APPT subspace, as derived in Section~\ref{sec:maxPurityAPPT33}. Type 3 trajectories, whose existence is unknown, contain APPT entangled ensembles, i.e., their APPT boundary does not correspond to their ASEP boundary.}
\label{fig:apptSketch}
\end{figure}
With the understanding of negativity decay under depolarizing noise established in Section~\ref{sec:negativityDecay}, the APPT boundary can be straightforwardly calculated for mixtures expressible as depolarized pure states, $\mathbf{r} = (1-p) \mathbf{r}_{\text{pure}}$.
To do so, identify a pure state that reaches its PPT boundary at the smallest Bloch vector norm upon depolarization.
From Eq.~\eqref{eq:recurrence}, this state will have the smallest possible negativity rank ($n_- = 1$) and the largest magnitude of its single negative PT eigenvalue ($\lambda^\Gamma_- = \frac{1}{2}$, as proven in Ref.~\cite{PhysRevA.87.054301}). 
As depicted in Fig.~\ref{fig:apptSketch}, the two-level Bell state embedded in any qudit dimensions has these properties.  
By setting $\mathcal{N}_1 = 0$, $\mathcal{N}_0 = 1/2$, and $n_- = 1$ in Eq.~\eqref{eq:recurrence} leading to a single-application channel strength of $p_0=\frac{d^2}{d^2+2}$, the Bloch vector norm at which the Bell state becomes PPT under the effect of depolarizing noise is found to be
\begin{equation}
    \left|\mathbf{r}_0\right|= p_0 \left|\mathbf{r}_{\text{pure}}\right| = \frac{2}{d^2+2}\sqrt{d^2-1} \\\ .
\end{equation}
Since all pure states are unitarily connected to one another and the depolarizing channel commutes with all unitary transformations, the image of pure states under depolarizing noise is also unitarily connected. 
Therefore, the $\left|\mathbf{r}_0\right|$ boundary derived from the embedded Bell state coincides with the APPT boundary for all ensembles connected radially to the Bloch surface. 
This set of ensembles includes the well-studied isotropic states, and $\left|\mathbf{r}_0\right|$ agrees with the standard APPT boundary calculated for them~\cite{Arunachalam:2015ywm}.  

Furthermore, the spectrum of the APPT ensemble resulting from the decay of the Bell state under depolarizing noise turns out to be an extremal point of the APPT set~\cite{Song:2024svl}. Since the APPT set is a convex set, it is generated by the convex hull of extreme points, which are defined as points that cannot be written as a convex combination of other points in the set. Thus, we have provided a simple derivation of a critical APPT bound for an important family of states in arbitrary qudit-qudit dimensions through analysis of the negativity evolution under depolarizing noise. 
\subsection{Maximum Purity of Qutrit-Qutrit APPT Ensemble}
\label{sec:maxPurityAPPT33}

It is interesting that the ASEP and APPT subspaces are equivalent for qubit-qudit
states $\text{ASEP}_{2\times d} = \text{APPT}_{2\times d}$~\cite{Johnston:2013wey} despite the strict separation of the separable and PPT subspaces for qubit-qudit states with $d >3$.  
While progress continues in understanding the relation between these absolute subspaces, their distinction in higher dimensions currently remains unknown~\cite{Arunachalam:2015ywm}.  
If such a distinction arises, an entangled APPT state would necessarily exist.  
In order to develop an additional strategy for excluding an entanglement witness's ability to identify entanglement in an APPT state, focusing on the $3\times 3$ dimensional case,  the following presents an analytic derivation of the maximum purity of a qutrit-qutrit APPT state.

For qutrit-qutrit states, there are two matrix inequalities that determine APPT classification. 
\begin{theorem}
    Let $\vec{\lambda}$ be the ordered eigenvalues $\lambda_1 \geq \lambda_2 \geq \cdots \geq \lambda_9 \geq 0$ of qutrit-qutrit density matrix, $\rho$. Then,
    \begin{subequations}\label{eq:APPTcriteria}
    \begin{align}
L_1\equiv    
\begin{pmatrix}
2\lambda_{9} & \lambda_{8}-\lambda_{1} & \lambda_{6}-\lambda_{2} \\
\lambda_{8}-\lambda_{1} & 2\lambda_{7} & \lambda_{5}-\lambda_{3} \\
\lambda_{6}-\lambda_{2} & \lambda_{5}-\lambda_{3} & 2\lambda_{4}
\end{pmatrix}
\succeq 0,
\\
L_2\equiv
\begin{pmatrix}
2\lambda_{9} & \lambda_{8}-\lambda_{1} & \lambda_{7}-\lambda_{2} \\
\lambda_{8}-\lambda_{1} & 2\lambda_{6} & \lambda_{5}-\lambda_{3} \\
\lambda_{7}-\lambda_{2} & \lambda_{5}-\lambda_{3} & 2\lambda_{4}
\end{pmatrix}
\succeq 0
    \end{align}
    \end{subequations}
are both satisfied if and only if $\rho$ is $APPT$~\cite{Hildebrand:2005eni}. 
\end{theorem}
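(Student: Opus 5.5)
The plan is to follow Hildebrand's reduction of the APPT question to a finite family of linear matrix inequalities. Two facts make this possible. First, APPT depends only on the spectrum $\vec\lambda$. Second, positivity of $(U\rho U^\dagger)^\Gamma$ for every unitary $U$ is equivalent to the statement that, for every $9$-vector $\vec\mu$ arising as the spectrum of $(|\psi\rangle\langle\psi|)^{\Gamma}$ rotated appropriately, the minimum over unitaries of $\langle\phi|(U\rho U^\dagger)^\Gamma|\phi\rangle$ is nonnegative.

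Concretely, write $\langle\phi|(U\rho U^\dagger)^\Gamma|\phi\rangle=\mathrm{Tr}\!\left[U\rho U^\dagger\,(|\phi\rangle\langle\phi|)^\Gamma\right]$. Take the Schmidt decomposition $|\phi\rangle=\sum_i s_i|ii\rangle$. The operator $(|\phi\rangle\langle\phi|)^\Gamma$ then has eigenvalues $s_i^2$ on $|ii\rangle$ and $\pm s_is_j$ on $(|ij\rangle\pm|ji\rangle)/\sqrt2$ for $i<j$. By von Neumann's trace inequality, the minimum over $U$ is $\sum_k \lambda_k^{\downarrow}\mu_k^{\uparrow}$, where $\vec\mu$ is this spectrum sorted increasingly. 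APPT is therefore equivalent to
\[
\sum_k\lambda_k^{\downarrow}\mu_k^{\uparrow}\ge0
\]
for all Schmidt vectors $s$.

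The next step is to linearize in $s$. Each candidate pairing is a quadratic form $s^T L s$, where $L$ is a symmetric $3\times 3$ matrix. Its diagonal entries are $2\lambda$'s coming from the $s_i^2$ eigenvalues; summing both off-diagonal positions gives $2s_is_j$ times the coefficient, hence the factor $2$ on the diagonal after normalization. Its off-diagonal entries are $\lambda_a-\lambda_b$, where $\lambda_a$ pairs with $+s_is_j$ and $\lambda_b$ with $-s_is_j$. The three negative eigenvalues $-s_is_j$ are the smallest, so they pair with $\lambda_1,\lambda_2,\lambda_3$. The ordering among the positive eigenvalues $\{s_i^2, s_is_j\}$ depends on $s$. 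The condition becomes: for each admissible ordering $\pi$ of these six numbers (consistent with $s_1\ge s_2\ge s_3\ge0$), the form $s^T L_\pi s\ge0$ on the corresponding cone.

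The last step is to cut this down to $L_1$ and $L_2$, and this is the main obstacle. The plan has three parts. First, note that $\min_{\pi}$ over orderings is attained by the true increasing order, and that replacing "on the cone" by full positive semidefiniteness is justified: any $s$ with mixed signs can be flipped to nonnegative entries without increasing $s^TLs$, because the off-diagonal entries $\lambda_a-\lambda_b$ are $\le0$. Second, argue that for the extreme assignment the ordering of the positives is forced to be $s_1^2\ge s_1s_2\ge\{s_2^2,\,s_1s_3\}\ge s_2s_3\ge s_3^2$. The only ambiguity is $s_2^2$ versus $s_1s_3$, so exactly two orderings survive. Assigning $\lambda_4,\dots,\lambda_9$ in order to these six positive eigenvalues yields precisely $L_1$ (when $s_2^2\le s_1s_3$) and $L_2$ (otherwise). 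Third, check that the comparability constraints can be dropped, i.e. that requiring both $L_1\succeq0$ and $L_2\succeq0$ globally is equivalent to the cone-restricted conditions. I expect to handle this by showing that the "wrong" ordering always gives a larger pairing value, via a rearrangement inequality, so each matrix's global PSD condition is implied by the true minimum and conversely.
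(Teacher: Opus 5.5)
The paper gives no proof of this statement; it imports it from Hildebrand~\cite{Hildebrand:2005eni}. Your outline is essentially Hildebrand's original argument, and its core is correct. You reduce APPT, via the Hermitian trace inequality, to $\sum_k\lambda_k^\downarrow\mu_k^\uparrow\ge0$ over Schmidt vectors $s_1\ge s_2\ge s_3\ge0$. You then observe that the only undetermined comparison in the partially transposed pure-state spectrum is $s_2^2$ versus $s_1s_3$.

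Two things need fixing. First, your labels are swapped. $L_1$ pairs $\lambda_6$ with $s_1s_3$ and $\lambda_7$ with $s_2^2$, so it is the sorted pairing when $s_1s_3\le s_2^2$. $L_2$ is the sorted pairing when $s_2^2\le s_1s_3$. This is harmless, since both inequalities are imposed.

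Second, the step you call the main obstacle is not closed as written. Flipping signs only brings $s$ into the nonnegative orthant, which is six permuted copies of the ordered cone. $L_{1,2}$ are not invariant under permuting coordinates, so cone-restricted statements say nothing about vectors such as $(0,1,1)$ or $(1,0,1)$.

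The clean fix also makes the sign-flip step unnecessary. For \emph{every} $s\in\mathbb{R}^3$, the nine coefficients multiplying $\lambda_1,\dots,\lambda_9$ in $\tfrac12 s^TL_{1,2}s$ are exactly the eigenvalues $s_i^2$, $\pm s_is_j$ of $(|\phi_s\rangle\langle\phi_s|)^\Gamma$, where $|\phi_s\rangle=\sum_i s_i|ii\rangle$. Each eigenvalue is used once, with eigenvectors $|ii\rangle$ and $(|ij\rangle\pm|ji\rangle)/\sqrt2$ regardless of the signs of $s$. Choose $U$ sending the $\lambda_k$-eigenvector of $\rho$ to the corresponding eigenvector. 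Then $\tfrac12 s^TL_{1,2}s=\langle\phi_s|(U\rho U^\dagger)^\Gamma|\phi_s\rangle$, which is nonnegative whenever $\rho$ is APPT. Equivalently, apply your rearrangement inequality to arbitrary, not just ordered, $s$.

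So APPT $\Rightarrow L_{1,2}\succeq0$ is immediate. The ordering analysis is needed only for the converse. There it is trivial: for ordered $s$ the minimizing pairing equals $\tfrac12 s^TL_1s$ or $\tfrac12 s^TL_2s$, hence is nonnegative. That bounds $\langle\phi|(U\rho U^\dagger)^\Gamma|\phi\rangle$ from below for all $\phi$ and $U$. With these adjustments your proof is complete.
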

\noindent
In other words, the absence of negative eigenvalues in $L_1$ and $L_2$ is a necessary and sufficient condition for $\rho$ to be in the APPT subspace.

To begin our calculation of the maximum purity, note that $L_1$ and $L_2$ can have maximally one negative eigenvalue.
\begin{theorem}\label{thm: negeigL}
    If $\Vec{\lambda}$ is an ordered vector of qutrit-qutrit density matrix eigenvalues, then $L_1$ and $L_2$ of Eq.~\eqref{eq:APPTcriteria} have at most one negative eigenvalue. 
\end{theorem}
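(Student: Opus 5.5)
Since $L_1$ and $L_2$ are real symmetric $3\times 3$ matrices, the plan is to show that each has at least two nonnegative eigenvalues. We will do this with Cauchy interlacing applied to a $2\times 2$ principal submatrix that is positive semidefinite. Interlacing says that if a principal $2\times 2$ submatrix $M$ of a $3\times 3$ Hermitian matrix $L$ has eigenvalues $\mu_1\ge\mu_2$, and $L$ has eigenvalues $\nu_1\ge\nu_2\ge\nu_3$, then $\nu_1\ge\mu_1\ge\nu_2\ge\mu_2\ge\nu_3$. So if $M\succeq 0$, then $\mu_2\ge 0$, hence $\nu_2\ge 0$ and $\nu_1\ge 0$. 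At most one eigenvalue, namely $\nu_3$, can then be negative. The whole argument therefore reduces to finding, for each of $L_1$ and $L_2$, a PSD $2\times2$ principal submatrix, using only the ordering $\lambda_1\ge\cdots\ge\lambda_9\ge0$.

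The natural candidate is the lower-right block, which involves the largest diagonal entries. For $L_1$ this block is
\begin{equation}
M_1=\begin{pmatrix} 2\lambda_7 & \lambda_5-\lambda_3\\ \lambda_5-\lambda_3 & 2\lambda_4\end{pmatrix},
\end{equation}
and $M_1\succeq0$ iff $4\lambda_4\lambda_7\ge(\lambda_3-\lambda_5)^2$. This fails in general, for example when $\lambda_3$ is large and $\lambda_7=0$. So no single fixed principal minor works for all spectra, and this is the step I expect to be the main obstacle.

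The fallback is a different certificate: a $2$-dimensional subspace on which the quadratic form $x^T L x$ is nonnegative. By Courant--Fischer, $\nu_2=\max_{\dim V=2}\min_{x\in V,\|x\|=1}x^TLx$, so any such subspace gives $\nu_2\ge0$ directly. Equivalently, it suffices to show that $L$ is not negative definite on any $2$-dimensional subspace, i.e.\ that $L$ does not have two negative eigenvalues.

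We will argue this by contradiction. Suppose $\nu_2<0$, so that $\nu_2,\nu_3<0$. Since $\mathrm{tr}\,L_1=2(\lambda_4+\lambda_7+\lambda_9)\ge0$, we need $\nu_1\ge -(\nu_2+\nu_3)>0$. Then $\det L_1=\nu_1\nu_2\nu_3>0$, and the sum of the principal $2\times2$ minors,
\begin{equation}
e_2=\nu_1\nu_2+\nu_1\nu_3+\nu_2\nu_3,
\end{equation}
is negative. This holds because $e_2=\nu_2\nu_3+\nu_1(\nu_2+\nu_3)\le \nu_2\nu_3-(\nu_2+\nu_3)^2<0$.

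So the plan is to prove that $\det L>0$ and $e_2<0$ cannot hold simultaneously under the ordering constraints. The key structural facts are:
\begin{itemize}
\item the off-diagonal entries are all nonpositive;
\item their magnitudes are controlled by the diagonal: $\lambda_1-\lambda_8\le\lambda_1$, and every diagonal entry is at most $2\lambda_4$.
\end{itemize}
Write $a=\lambda_1-\lambda_8$, $b=\lambda_2-\lambda_6$ (or $\lambda_2-\lambda_7$ for $L_2$), and $c=\lambda_3-\lambda_5$, all nonnegative. Then
\begin{equation}
\det L=8\lambda_9\lambda_7\lambda_4-2\lambda_9c^2-2\lambda_7b^2-2\lambda_4a^2-2abc.
\end{equation}
Every term except the first is nonpositive. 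The condition $\det L>0$ therefore forces $8\lambda_4\lambda_7\lambda_9$ to dominate, with $\lambda_9>0$ and each of $a^2,b^2,c^2$ bounded by products of the diagonal entries. From these bounds I would derive $4\lambda_7\lambda_4\ge c^2$ (and analogously for the other pairs), which makes the $2\times 2$ minors nonnegative and contradicts $e_2<0$. Concretely, $\det L>0$ gives $8\lambda_4\lambda_7\lambda_9>2\lambda_9c^2+2abc\ge2\lambda_9c^2$, hence $4\lambda_4\lambda_7>c^2$. By the same reasoning, $4\lambda_9\lambda_4>b^2$ and $4\lambda_9\lambda_7>a^2$ (for $L_2$ the diagonal entries are $2\lambda_9,2\lambda_6,2\lambda_4$). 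All three principal minors are then positive, so $e_2>0$, a contradiction.

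This argument uses only that the off-diagonal magnitudes $a,b,c$ are nonnegative and the diagonal is nonnegative, so it applies identically to $L_1$ and $L_2$. I will double-check the degenerate cases where $\det L=0$; there the same inequalities hold non-strictly, which still gives $e_2\ge0$ and contradicts $e_2<0$.
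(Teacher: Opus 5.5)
Your argument is correct, and it takes a genuinely different route from the paper. The paper's proof (in the appendix) is a case analysis on the rank of $\rho$:
\begin{itemize}
\item it rules out negative semidefiniteness with $v=\tfrac{1}{\sqrt2}(1,-1,0)$;
\item it computes the spectrum explicitly for ranks 1--2;
\item it tabulates $\det L_{1,2}$ and shows it is negative for ranks 3--8, treating the flat rank-8 spectrum separately;
\item for rank 9 it uses either the sign sequence of leading principal minors, or the factorization $L_1=D(\mathbb{I}-C)D$ with Sylvester's law of inertia, Perron--Frobenius, and $\det C\ge 0$.
\end{itemize}

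You instead give one uniform argument. Two negative eigenvalues plus $\operatorname{tr}L\ge 0$ force $\nu_1>0$, hence $\det L>0$ and $e_2<0$. But with nonnegative diagonal and nonpositive off-diagonal entries, every term of $\det L$ except $xyz$ is nonpositive. So $\det L>0$ makes all three $2\times 2$ principal minors positive, giving $e_2>0$, a contradiction.

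What your route buys:
\begin{itemize}
\item It uses only the sign pattern enforced by the ordering of $\vec\lambda$, and so proves the claim for every real symmetric $3\times3$ matrix with that pattern.
\item It needs no invertible diagonal. The paper's $D=A^{1/2}$ rescaling does need one, which is why that argument is confined to rank 9.
\item Your key inequality also proves the fact the paper's first rank-9 argument only states as observed: a negative leading $2\times 2$ minor precludes $\det L_{1,2}>0$.
\end{itemize}
In return, the paper's case analysis gives finer information the theorem does not require, such as exactly one negative eigenvalue whenever $\det L_{1,2}<0$, and explicit spectra at low rank.

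Some cleanup is needed.
\begin{itemize}
\item The interlacing and Courant--Fischer paragraphs are abandoned false starts. The bounds $\lambda_1-\lambda_8\le\lambda_1$ and ``every diagonal entry is at most $2\lambda_4$'' are never used. All of these can go.
\item Delete the closing remark on degenerate cases. Under your contradiction hypothesis $\det L=\nu_1\nu_2\nu_3>0$ strictly, so $\det L=0$ never occurs. The claim you make there is also false: for a rank-2 spectrum, $\det L_1=0$ but the principal minors are $-\lambda_1^2$, $-\lambda_2^2$, $0$, so $e_2<0$.
\item Before dividing to get $4\lambda_4\lambda_7>c^2$ and its analogues, note that $\det L>0$ already forces $8\lambda_4\lambda_7\lambda_9>0$ (with $\lambda_6$ in place of $\lambda_7$ for $L_2$). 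So all diagonal entries are strictly positive and the divisions are legitimate.
\end{itemize}
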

\noindent
This theorem can be proven by either directly calculating the eigenvalues or analyzing the sign of determinants of $L_1$ and $L_2$, as discussed in Appendix~\ref{app: Proof of Theorem II.2}. 

Next, we determine a bound on the magnitude of the single negative eigenvalue of $L_{1,2}$.
\begin{corollary}
\label{cor: II.3}
    Let $\mu_{\min}$ be the smallest eigenvalue of $L_1$ or $L_2$, then $|\mu_{\min}|\leq1$
\end{corollary}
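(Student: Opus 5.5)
The plan is to bound $\mu_{\min}$ from below directly via the Gershgorin circle theorem. Nothing finer, such as the explicit eigenvalues behind Theorem~\ref{thm: negeigL}, should be needed. Since $\mu_{\min}$ is the smallest eigenvalue, the claim $|\mu_{\min}|\le 1$ is only nontrivial when $\mu_{\min}<0$. By Theorem~\ref{thm: negeigL} this is the single negative eigenvalue, so it suffices to show $\mu_{\min}\ge -1$.

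Gershgorin places every eigenvalue of $L_1$ (or $L_2$) in a disc centered at some diagonal entry $2\lambda_k$, whose radius is the sum of the absolute off-diagonal entries in that row. Consequently
\begin{equation}
\mu_{\min}\ \ge\ \min_{\text{rows}}\Big(2\lambda_{kk}-\sum_{j\neq k}|L_{kj}|\Big)\ \ge\ -\max_{\text{rows}}\sum_{j\neq k}|L_{kj}| ,
\end{equation}
using that every diagonal entry $2\lambda_k$ is nonnegative.

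Next I bound the off-diagonal magnitudes using the ordering $\lambda_1\ge\cdots\ge\lambda_9\ge 0$. Each off-diagonal entry has the form $\lambda_a-\lambda_b$ with $a>b$, so $|\lambda_a-\lambda_b|=\lambda_b-\lambda_a\le\lambda_b$. This gives
\begin{equation}
|\lambda_8-\lambda_1|\le\lambda_1,\qquad |\lambda_{6}-\lambda_2|,\ |\lambda_7-\lambda_2|\le\lambda_2,\qquad |\lambda_5-\lambda_3|\le\lambda_3 ,
\end{equation}
which holds identically for $L_1$ and $L_2$. The three row sums are then at most $\lambda_1+\lambda_2$, $\lambda_1+\lambda_3$, and $\lambda_2+\lambda_3$. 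Each of these is at most $\lambda_1+\lambda_2+\lambda_3\le\sum_k\lambda_k=\operatorname{Tr}\rho=1$. Hence $\mu_{\min}\ge -1$, i.e.\ $|\mu_{\min}|\le 1$.

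There is no real obstacle here; the only point requiring care is checking the sign conventions of each off-diagonal entry against the ordering, so that the absolute values collapse to single eigenvalues. As a cross-check, one can instead use the variational form $\mu_{\min}=\min_{\|x\|=1}x^{T}Lx$. Dropping the nonnegative diagonal terms and applying $2|x_ix_j|\le x_i^2+x_j^2\le 1$ gives the same bound $\mu_{\min}\ge-(\lambda_1+\lambda_2+\lambda_3)\ge-1$. This version also shows that equality would require most of the spectral weight to sit in $\lambda_1,\lambda_2$, which is consistent with the extremal pure-state case.
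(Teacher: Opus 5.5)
Your proof is correct and is essentially the paper's argument: Gershgorin discs combined with the eigenvalue ordering and $\operatorname{tr}\rho=1$. The paper writes out the lowest disc endpoint, $-(\lambda_1+\lambda_2)+\lambda_6+\lambda_8+2\lambda_9$ for $L_1$ (with $\lambda_7$ in place of $\lambda_6$ for $L_2$), whereas you bound every row radius uniformly, which also suffices. Your appeal to Theorem~\ref{thm: negeigL} is unnecessary, and the case $\mu_{\min}\ge 0$, which you call trivial, is closed by $\mu_{\min}\le (L_{1,2})_{11}=2\lambda_9\le 2/9$.
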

\begin{proof}
    The Gershgorin circle theorem~\cite{Gershgorin1931} states that the eigenvalues of a matrix, $A$, lie within at least one of the Gershgorin discs, $D(a_{ii}, R_i)$ where $a_{ii}$ is a diagonal element of $A$ defining the center of the $i^{\text{th}}$ disc and $R_i=\sum_{j \neq i} |a_{ij}|$ defines its radius. Applying this theorem to $L_1$ and $L_2$ gives the following lowerbounds on each negative eigenvalue. For $L_1$, the lowest disk yields
        \begin{equation}
        \mu_{\text{min}}\geq -(\lambda_1+\lambda_2)+\lambda_6+\lambda_8+2\lambda_9 \ \ \ .
    \end{equation}
    Similarly for $L_2$,
    \begin{equation}
        \mu_{\text{min}}\geq -(\lambda_1+\lambda_2)+\lambda_7+\lambda_8+2\lambda_9 \ \ \ .
    \end{equation}
    While a stronger $\vec{\lambda}$-dependent bound could be composed, these disks are consistent with $|\mu_{\text{min}}| \leq 1$, due to the $\text{tr}\left[\rho\right] = \sum_i \lambda_i = 1$ normalization.  The saturation of the lowerbound, $\mu_{\text{min}} = -1$, is achieved when $\vec{\lambda}=(1,0,0,0,0,0,0,0,0)$. Therefore, $\left|\mu_{\text{min}}\right| \geq -1$ provides a $\vec{\lambda}$-independent upperbound and tight lowerbound on the $L_{1,2}$ negative eigenvalue. 
\end{proof}
The goal of this section is to relate the purity of a state to  the value of its $L_1$ and $L_2$ negative eigenvalues. First, we must prove some intermediary Lemmas. 

\begin{lemma}\label{lem: II.4}
    Let $\lambda_1$ be the maximal eigenvalue of a qutrit-qutrit density matrix and $P$ be the purity. Then $\lambda_1\geq \frac{1+\sqrt{{8(9P-1)}}}{9}$.
\end{lemma}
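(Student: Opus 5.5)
The plan is to reduce the relation between $\lambda_1$ and $P=\sum_i\lambda_i^2$ to a one-variable quadratic inequality. Fix $\lambda_1$ and optimize the purity over the remaining eigenvalues $\lambda_2,\dots,\lambda_9$, subject to $\sum_{i\ge2}\lambda_i = 1-\lambda_1$ and $0\le\lambda_i\le\lambda_1$.

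\textbf{Step 1.} By Cauchy--Schwarz (equivalently, convexity of $x\mapsto x^2$), $\sum_{i\ge 2}\lambda_i^2 \ge (1-\lambda_1)^2/8$. Equality holds exactly when $\lambda_2=\cdots=\lambda_9=(1-\lambda_1)/8$, which respects the ordering whenever $\lambda_1\ge 1/9$. Since $\lambda_1$ is the largest of nine numbers summing to one, $\lambda_1\ge 1/9$ always holds. This gives
\begin{equation}
P \ge \lambda_1^2 + \frac{(1-\lambda_1)^2}{8}, \qquad\text{i.e.}\qquad 9\lambda_1^2 - 2\lambda_1 + 1 - 8P \le 0 .
\end{equation}

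\textbf{Step 2.} Solve the quadratic. Its discriminant is $4-36(1-8P)=4\cdot 8(9P-1)$, so the roots are $\lambda_\pm=\bigl(1\pm\sqrt{8(9P-1)}\bigr)/9$. These are real because $P\ge 1/9$, and the inequality above holds exactly on the interval $\lambda_-\le\lambda_1\le\lambda_+$. The boundary value $\lambda_1=\bigl(1+\sqrt{8(9P-1)}\bigr)/9$ is attained precisely by the family $\vec\lambda=(\lambda_1,\tfrac{1-\lambda_1}{8},\dots,\tfrac{1-\lambda_1}{8})$. These are the depolarized pure states (the type~1 trajectories of Fig.~\ref{fig:apptSketch}), and on this family the stated relation between $\lambda_1$ and $P$ holds with equality.

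\textbf{Main obstacle.} The obstacle is the direction of the inequality, and I would settle it before writing anything further. The convexity argument naturally yields $\lambda_1\le\bigl(1+\sqrt{8(9P-1)}\bigr)/9$: the stated expression is the largest admissible $\lambda_1$ at fixed purity. The literal lower bound cannot hold for all spectra. For example, $\vec\lambda=(\tfrac12,\tfrac12,0,\dots,0)$ has $P=\tfrac12$ but $\bigl(1+\sqrt{28}\bigr)/9\approx0.70>\tfrac12$. The only lower bound that Step~2 delivers is the trivial one, $\lambda_1\ge\lambda_-\le 1/9$. A genuine nontrivial lower bound would instead be $\lambda_1\ge P$, from $P\le\lambda_1\sum_i\lambda_i$.

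I would therefore present Steps~1--2 as the proof. I would state explicitly that the displayed bound holds with equality on the extremal family and bounds $\lambda_1$ from above for general spectra. Within the maximum-purity argument, the relevant use is restricted to the extremal family, where the two directions coincide.
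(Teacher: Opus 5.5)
Your diagnosis is right: the lemma is false as stated, and the error is in the paper's proof. The paper argues exactly as in your Step~1. Degenerate $\lambda_2=\cdots=\lambda_9$ minimize the purity at fixed $\lambda_1$, so $P\ge\lambda_1^2+(1-\lambda_1)^2/8$. It then asserts that this ``may be inverted to provide a lower bound on the largest eigenvalue,'' keeping the root $\lambda_+(P)$. But $f(x)=x^2+(1-x)^2/8$ has $f'(x)=(9x-1)/4\ge 0$ on $[1/9,1]$, so $P\ge f(\lambda_1)$ inverts to $\lambda_1\le f^{-1}(P)=\lambda_+(P)$, as you found. Your spectrum $(\tfrac12,\tfrac12,0,\dots,0)$ is a valid counterexample. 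The same reversal affects the paper's general-$d$ remark after the proof. Your Steps~1--2 are the correct content of the lemma: an upper bound, with equality exactly on the depolarized-pure-state family, together with $\lambda_1\ge P$ as the honest lower bound.

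Your closing sentence is not accurate, though. The paper does not use this bound only on the extremal family. In Lemma~\ref{lem: II.5} it bounds $\mu_{\min}\le(3-11\lambda_1)/8$, which is decreasing in $\lambda_1$. It then substitutes the claimed lower bound on $\lambda_1$ for an arbitrary spectrum to reach $\mu_{\min}\le\bigl(8-11\sqrt{2(9P-1)}\bigr)/36$. Both proofs of Theorem~\ref{thm: theoremII.6} rely on this for all spectra to exclude APPT states with $P>17/121$. With the correct direction, that substitution goes the wrong way, so the chain only covers the family $\lambda_2=\cdots=\lambda_9$.

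For general spectra, the Rayleigh-quotient argument yields only $\lambda_7+\lambda_8+\lambda_9\ge\lambda_1$ (hence $\lambda_1\le 3/11$) for APPT states. That constraint cannot give the purity bound. The spectrum $(\tfrac17,\dots,\tfrac17,0,0)$ has $P=\tfrac17>\tfrac{17}{121}$, while both quotients $\lambda_7+\lambda_8+\lambda_9-\lambda_1$ and $\lambda_6+\lambda_8+\lambda_9-\lambda_1$ vanish. It is excluded from APPT only by other entries of $L_1$: here $2\lambda_9=0$ while $\lambda_8-\lambda_1\neq 0$. So the reversal is not harmless downstream. A general bound of $17/121$ would need a separate argument that uses more of the matrix inequalities.
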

\begin{proof}
    The purity is the sum of squares of density matrix eigenvalues, $P= \sum_i \lambda_i^2$. With $\lambda_1$ fixed, $\sum_i \lambda_i^2$ is minimized when the rest of the eigenvalues are all degenerate, i.e. $\lambda_2=\cdots =\lambda_9$. From the normalization condition, $\sum_i \lambda_i = 1$, this means that they must be equal to $\frac{1-\lambda_1}{8}$. Hence, 
    \begin{equation}
        P\geq \lambda_1^2 +\frac{(1-\lambda_1)^2}{8}  \\\ ,
    \end{equation} which may be inverted to provide a lower bound on the largest eigenvalue as a function of the purity,\begin{equation}
        \lambda_1 \geq \frac{1+\sqrt{8(9P-1)}}{9} \\\ ,
    \end{equation}
    by selecting the solution that satisfies the ordered hierarchy of eigenvalues $\lambda_1 \geq \lambda_{2, \dots, 9}$ for $P > P_{\text{min}} = \frac{1}{9}$.
\end{proof}
\noindent
Note that this Lemma generalizes to arbitrary $d$ as $\lambda_1 \geq \frac{1}{d^2}\left(1+\sqrt{(d^2-1)(d^2P-1)}\right)$, though this section will remain focused on the APPT structure of $d = 3$ pairs.

Continuing from Lemma~\ref{lem: II.4}, the minimum eigenvalues of $L_{1,2}$ can be related to the purity. 
\begin{lemma}
\label{lem: II.5}
    Let $P$ be the purity of density matrix $\rho$. Then the minimum eigenvalue of $L_1$ or $L_2$, $\mu_{\text{min}} = \min\left[\mu_{\text{min},L_1},\mu_{\text{min},L_2}\right]$, is upperbounded $\mu_{\text{min}} \leq \frac{8-11\sqrt{2(9P-1)}}{36}$. 
\end{lemma}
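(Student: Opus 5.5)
The plan is to bound $\mu_{\min}$ from above with a Rayleigh quotient for one fixed test vector, then feed in Lemma~\ref{lem: II.4}. Since $\mu_{\min}=\min[\mu_{\min,L_1},\mu_{\min,L_2}]\le \mu_{\min,L_1}$, it is enough to bound the smallest eigenvalue of $L_1$. For any unit vector $v$ we have $\mu_{\min,L_1}\le v^{T}L_1 v$.

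I would take $v=(1,1,0)^{T}/\sqrt{2}$. This choice keeps the two smallest diagonal entries $2\lambda_9$, $2\lambda_7$ and the off-diagonal entry $\lambda_8-\lambda_1$, which carries the large eigenvalue $\lambda_1$ with a negative sign. A direct computation gives
\begin{equation}
v^{T}L_1 v=\lambda_9+\lambda_7+(\lambda_8-\lambda_1)=\lambda_7+\lambda_8+\lambda_9-\lambda_1 .
\end{equation}

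The next step is to eliminate $\lambda_7,\lambda_8,\lambda_9$ in favour of $\lambda_1$. By normalization, $\lambda_2+\cdots+\lambda_9=1-\lambda_1$. Because the eigenvalues are ordered, the three smallest of these eight numbers have average at most the mean of all eight. Hence
\begin{equation}
\lambda_7+\lambda_8+\lambda_9\le \tfrac{3}{8}(1-\lambda_1),
\end{equation}
and therefore
\begin{equation}
\mu_{\min}\le \tfrac{3}{8}-\tfrac{11}{8}\lambda_1 .
\end{equation}
The right-hand side is decreasing in $\lambda_1$. Write $s=\sqrt{2(9P-1)}$, so that Lemma~\ref{lem: II.4} reads $\lambda_1\ge (1+2s)/9$. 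Substituting this lower bound gives
\begin{equation}
\mu_{\min}\le \tfrac{3}{8}-\tfrac{11}{72}(1+2s)=\tfrac{27-11-22s}{72}=\tfrac{8-11s}{36},
\end{equation}
which is exactly the claimed bound.

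The main obstacle is choosing the test vector and the eigenvalue inequality so that the constants match the stated $8/36$ and $11/36$. I reverse-engineered this by requiring a bound of the form $c_0-c_1\lambda_1$ that reproduces the target after substituting Lemma~\ref{lem: II.4}. That requirement forces $c_1=11/8$ and $c_0=3/8$, and these values are exactly what the vector $(1,1,0)/\sqrt2$ together with the averaging inequality produces.

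The remaining points to verify are routine: the averaging inequality for the bottom three of eight ordered nonnegative numbers, and the monotonicity used in the substitution. The bound is only meaningful for $P\ge 1/9$, which is already where Lemma~\ref{lem: II.4} applies.
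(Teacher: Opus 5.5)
Your argument follows the paper's proof almost verbatim. Restricting to $L_1$ because $\mu_{\min}$ is a minimum is fine, and the first two steps are correct: the Rayleigh quotient and the averaging inequality do give $\mu_{\min}\le\frac{3-11\lambda_1}{8}$. The gap is the last step. Since this bound decreases in $\lambda_1$, you need a \emph{lower} bound on $\lambda_1$ in terms of $P$, and Lemma~\ref{lem: II.4} does not provide one, because it is false as stated. The inequality $P\ge\lambda_1^2+\frac{(1-\lambda_1)^2}{8}$ in its proof is correct. However, its right-hand side is increasing in $\lambda_1$ on $[\frac19,1]$, so inverting it gives $\lambda_1\le\frac{1+\sqrt{8(9P-1)}}{9}$, an \emph{upper} bound. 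Concretely, $\vec\lambda=(\frac12,\frac12,0,\dots,0)$ has $P=\frac12$ and $\lambda_1=\frac12$, whereas Lemma~\ref{lem: II.4} would demand $\lambda_1\ge\frac{1+\sqrt{28}}{9}\approx 0.70$. The valid general lower bound, $\lambda_1\ge P$, is far too weak for your substitution. So the step you call routine is exactly where the proof breaks, and the paper's own proof breaks at the same place.

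Nor can the route be repaired by a better bound on $\lambda_1$, because the single test vector $v=(1,1,0)^T/\sqrt2$ is itself too weak.
\begin{itemize}
\item At $\vec\lambda=(\frac18,\dots,\frac18,0)$ the purity is $P=\frac18$ and the claimed bound equals $\frac{5}{72}$. You get $v^TL_1v=v^TL_2v=\frac18$, and $\lambda_1=\frac18$ is already the smallest value compatible with that purity.
\item At $(\frac12,\frac12,0,\dots,0)$ you get $v^TL_1v=-\frac12$, against the target $\frac{8-11\sqrt7}{36}\approx-0.586$.
\end{itemize}
The lemma's inequality does hold in both cases, with true values $\mu_{\min}=0$ and $\mu_{\min}=-1/\sqrt2$. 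It holds only because of entries your vector ignores: the diagonal $2\lambda_9$ in the first case and the off-diagonal $\lambda_6-\lambda_2$ in the second.

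To see the shortfall quantitatively, write $\vec\lambda=\frac19\mathbf{1}+\delta$, so that $P=\frac19+\|\delta\|_2^2$. The lemma is then equivalent to $\frac29-\mu_{\min}\ge\frac{11\sqrt2}{12}\|\delta\|_2$. Your vector only yields $\frac29-\mu_{\min}\ge\delta_1-\delta_7-\delta_8-\delta_9$, which in the first example is $\frac{7}{72}$, short of the required $\frac{11}{72}$. Spreading $\lambda_2,\dots,\lambda_9$ raises $P$ at fixed $\lambda_1$, and a correct proof must show how that spread pushes $\mu_{\min}$ down. One way is to combine several test vectors; another is to analyze the smallest eigenvalue of $L_{1,2}$ directly.
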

\begin{proof}
    According to the Rayleigh quotient, $\lambda_{\text{min},M}  \leq \frac{v^T M v}{v^Tv}$ for a Hermitian matrix $M$ and a nonzero vector $v$ with equality being attained when $v$ is the eigenvector associated with the minimum eigenvalue, $\lambda_{\text{min},M}$. Therefore, $\mu_{\min,L_1} \leq \frac{v^T L_1 v}{v^T v}$ and $\mu_{\min,L_2} \leq \frac{v^T L_2 v}{v^T v}$ for arbitrary 3-vector $v$. Let $v=\frac{1}{\sqrt{2}}(1,1,0)$. Then the Rayleigh quotient for $L_1$ is $\lambda_9+\lambda_8+\lambda_7-\lambda_1$ and for $L_2$ is $\lambda_9+\lambda_8+\lambda_6-\lambda_1$. Since the eigenvalues are ordered and the maximum values of the smallest eigenvalues correspond to the case of degenerate $\lambda_{2,\dots,9} = \frac{1-\lambda_1}{8}$, 
    \begin{equation}\lambda_9+\lambda_8+\lambda_7-\lambda_1\leq \frac{3(1-\lambda_1)}{8}-\lambda_1
    \end{equation}
    and \begin{equation}
        \lambda_9+\lambda_8+\lambda_6-\lambda_1\leq \frac{3(1-\lambda_1)}{8}-\lambda_1 \\\ .
    \end{equation}
    Therefore, an upper bound of the minimum eigenvalue for both $L_1$ and $L_2$ is
    \begin{equation}
        \mu_{\min} \leq \frac{3-11\lambda_1}{8} \\\ ,
    \end{equation}
    which is saturated for the flat spectrum $\lambda_{1, \dots, 9}=1/9$.
    Combining this bound with the result of Lemma~\ref{lem: II.4}, 
    \begin{equation}
        \mu_{\min} \leq \frac{8-11\sqrt{2(9P-1)}}{36} \\\ .
        \label{eq: mumin}
    \end{equation}
While this bound is not tight in general, it is tight for the case $\lambda_{2,\dots,9} = \frac{1-\lambda_1}{8}$.
\end{proof}

We are now prepared to prove the main theorem of this section, which upperbounds the purity of a qutrit-qutrit APPT ensemble.
\begin{theorem}
\label{thm: theoremII.6}
For a qutrit-qutrit ensemble $\rho \in \text{APPT}$, the purity is upperbounded as $\text{tr}\left[ \rho^2\right] \leq P_{\text{max}}^{\text{APPT}} =  \frac{17}{121}$. 
\end{theorem}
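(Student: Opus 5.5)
The plan is to argue by contraposition using Lemma~\ref{lem: II.5}, and then to show the bound is attained by an explicit spectrum. By the theorem of Ref.~\cite{Hildebrand:2005eni} stated above, $\rho\in\text{APPT}$ means $L_1\succeq 0$ and $L_2\succeq 0$. In particular, the smallest eigenvalue $\mu_{\min}=\min[\mu_{\min,L_1},\mu_{\min,L_2}]$ must be nonnegative. Lemma~\ref{lem: II.5} gives, for any qutrit-qutrit spectrum,
\begin{equation}
0\leq \mu_{\min}\leq \frac{8-11\sqrt{2(9P-1)}}{36}\ \ \ .
\end{equation}
This inequality requires $11\sqrt{2(9P-1)}\leq 8$. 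Squaring is legitimate since both sides are nonnegative for $P\geq 1/9$. This gives $2(9P-1)\leq 64/121$, hence $9P\leq 1+32/121=153/121$, i.e.\ $P\leq 17/121$. The case $P=1/9$, the maximally mixed state, is trivially below this value. So every APPT ensemble satisfies $\text{tr}[\rho^2]\leq 17/121$.

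Next I will show this is the maximum rather than merely an upper bound. Lemma~\ref{lem: II.5} is tight when $\lambda_{2,\dots,9}=\frac{1-\lambda_1}{8}$, so I will take that family and set $\frac{3-11\lambda_1}{8}=0$. This gives $\lambda_1=3/11$ and $\lambda_{2,\dots,9}=1/11$, with purity $9/121+8/121=17/121$. I then check APPT membership directly by substituting into Eq.~\eqref{eq:APPTcriteria}. Both $L_1$ and $L_2$ become $\frac{1}{11}\begin{pmatrix}2&-2&0\\-2&2&0\\0&0&2\end{pmatrix}$, whose eigenvalues are $0$, $4/11$ and $2/11$. Both matrices are therefore positive semidefinite, so the state is APPT, sitting exactly on the boundary. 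This spectrum should coincide with the depolarized embedded Bell state of Section~\ref{sec:APPTboundary} at $p_0=9/11$: one eigenvalue $(1-p_0)+p_0/9=3/11$ and eight equal to $1/11$. That provides a consistency check with the type~1 trajectory of Fig.~\ref{fig:apptSketch}.

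The main point needing care is the logical chain inside Lemma~\ref{lem: II.5}. That lemma combines the Rayleigh bound $\mu_{\min}\leq\frac{3-11\lambda_1}{8}$, which is decreasing in $\lambda_1$, with the lower bound on $\lambda_1$ from Lemma~\ref{lem: II.4}. The direction of these inequalities composes correctly to an upper bound on $\mu_{\min}$ in terms of $P$. I will re-verify this monotonic composition, and the branch choice $P>1/9$, before relying on it. Once that is secured, the remaining algebra and the tightness check are routine.
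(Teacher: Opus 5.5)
\textbf{Gap in the upper bound.} Your route is the paper's own: contrapose Lemma~\ref{lem: II.5}, then exhibit $\vec\lambda=(3/11,1/11,\dots,1/11)$. The attainment half is correct. Your direct check that $L_1=L_2=\frac{1}{11}\begin{pmatrix}2&-2&0\\-2&2&0\\0&0&2\end{pmatrix}\succeq 0$ is a useful addition, and it proves $P_{\max}^{\text{APPT}}\geq 17/121$.

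The upper-bound half fails at exactly the step you flagged but did not check: Lemma~\ref{lem: II.4} is false. The inequality $P\geq\lambda_1^2+(1-\lambda_1)^2/8$ is correct. But it reads $9\lambda_1^2-2\lambda_1+1-8P\leq 0$, which confines $\lambda_1$ between the two roots. It therefore gives the \emph{upper} bound $\lambda_1\leq\frac{1+\sqrt{8(9P-1)}}{9}$, and no choice of branch turns this into a lower bound. For instance, $\vec\lambda=(\tfrac12,\tfrac12,0,\dots,0)$ has $P=\tfrac12$ but $\lambda_1=\tfrac12<\frac{1+\sqrt{28}}{9}\approx 0.70$.

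Since $\frac{3-11\lambda_1}{8}$ is decreasing in $\lambda_1$, only a lower bound on $\lambda_1$ could be fed into it. So the derivation of Lemma~\ref{lem: II.5} collapses, and with it your inequality $0\leq\mu_{\min}\leq\frac{8-11\sqrt{2(9P-1)}}{36}$. The paper's proof, including its depolarization-based variant, uses Lemma~\ref{lem: II.5} in the same way, so you inherited this gap rather than introduced it.

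\textbf{Why a corrected bound on $\lambda_1$ does not fix it.} The information behind Lemma~\ref{lem: II.5} is too weak to give $17/121$. The bound $\mu_{\min}\leq\frac{3-11\lambda_1}{8}$ only forces $\lambda_1\leq 3/11$, which $(3,3,3,2,0,\dots,0)/11$ satisfies with purity $31/121$. Even the sharper fact behind it is insufficient. That fact is $v^TL_1v\geq 0$ for $v=(1,1,0)/\sqrt2$, i.e.\ $\lambda_7+\lambda_8+\lambda_9\geq\lambda_1$, which also implies the $L_2$ version. It holds with equality for $\vec\lambda=(3,3,1,\dots,1)/13$, whose purity is $25/169>17/121$. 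That spectrum is excluded only by the full matrix: $L_1=\frac{1}{13}\begin{pmatrix}2&-2&-2\\-2&2&0\\-2&0&2\end{pmatrix}$ has eigenvalue $\frac{2-2\sqrt2}{13}<0$.

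So a genuine upper-bound argument must use the LMIs with a spectrum-dependent test vector. A convenient formulation is to write $\vec\lambda=\tfrac19\mathbf{1}+\delta$. Then $L_i(\vec\lambda)=\tfrac29\mathbb{I}+L_i(\delta)$, with $L_i(\delta)$ the same linear expression evaluated at $\delta$, and $P=\tfrac19+\|\delta\|^2$. Hence the theorem follows from, and Lemma~\ref{lem: II.5} is equivalent to, the homogeneous inequality $\min_i\lambda_{\min}(L_i(\delta))\leq-\frac{11\sqrt2}{12}\|\delta\|$ for every ordered traceless $\delta$.

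The direction $\delta\propto(8,-1,\dots,-1)$ gives equality and is locally optimal. Other directions, such as $\delta\propto(7,7,-2,\dots,-2)$ from the example above, need a different vector, e.g.\ $(\sqrt2,1,1)/2$. A global argument covering the whole ordered cone is what is missing.
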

\begin{proof}
    The purity at which the upperbound of Eq.~\eqref{eq: mumin} vanishes, crossing from positive to negative, is $P_0 = \frac{17}{121}$. Thus, APPT states do not exist with $P>P_0$. Because there exists a state that saturates the Eq.~\eqref{eq: mumin} upperbound, $P_0=P_{\text{max}}^{\text{APPT}}$ provides the maximum of the APPT purity. An APPT spectrum that realizes this maximum is $\lambda_{2 ,\ldots,9} = \frac{1-\lambda_1}{8}$ with $\lambda_1 = \frac{3}{11}$.
\end{proof}
\begin{proof}
    An alternative proof in the style of Section~\ref{sec:APPTboundary} can be crafted, recognizing the saturating spectrum above as that of depolarized pure states.
    Consider the effect of depolarizing noise on $L_1$ and $L_2$. The global depolarizing noise channel acting on a qutrit-qutrit ensemble $\rho$ is defined as in Eq.~\eqref{eq: depolarizingnoise} with $d = 3$. Every ensemble reaches its APPT boundary at some value of $p<1$ under the action of a global depolarizing channel. Therefore, if  the final purity at which a state reaches the APPT boundary under the depolarizing channel is written as a function of the starting purity, then the local maximum of this function on the allowed domain will be the maximum purity of an APPT ensemble. 
    
    Notice that the eigenvalues of $\rho$ under the depolarizing noise are updated as 
    \begin{equation}
        \mathcal{E}_p(\lambda)=(1-p)\lambda+\frac{p}{9} \ \ \ ,
    \end{equation}
    with $L_{1,2}$ updating accordingly,

    \begin{equation}
        \mathcal{E}_p(L_{1,2})=(1-p)L_{1,2}+\frac{2p}{9} \mathbb{I} \ \ \ .
    \end{equation}
    Therefore, the eigenvalues of $L_1$ under the action of depolarizing noise are updated as 
    \begin{equation}
        \label{eq: depolmu}
        \mathcal{E}_p(\mu)=(1-p)\mu+\frac{2p}{9} \ \ \ .
    \end{equation}
    Using Theorem \ref{thm: negeigL}, the $p$-value depolarizing the ensemble to its APPT boundary can be found at the critical point $\mathcal{E}_{\bar{p}}(\mu_{\text{min}}) \equiv 0$, where $\mu_{\text{min}}$ is the single negative eigenvalue of $L_1$ or $L_2$,
\begin{equation}
\label{eq: pmu}
\bar{p} = \frac{9 \mu_{\text{min}}}{9\mu_{\text{min}}-2} \ \ \ .
\end{equation}
    The effect of the depolarizing noise on the purity of $\rho$ is given by 
    \begin{equation}
    \label{eq: depolpurity}
        P_f=(1-p)^2 P_0 +\frac{2p-p^2}{9} \\\ ,
    \end{equation}
    where $P_0$ is the initial purity and $P_f$ is the final purity. Evaluation of $P_f\left(p=\bar{p}\right)$ produces an expression for the purity at the corresponding APPT boundary as a function of the initial purity $P_0$ and the initial negative eigenvalue $\mu_\text{min}$ of $L_1$ or $L_2$. Maximization of $P_f\left(p=\bar{p}\right)$ subject to the domains $\frac{1}{9}\leq P_0 \leq 1$, $-1\leq\mu_{\text{min}}\leq 0$ of Corollary~\ref{cor: II.3}, and the correlated bound of Lemma~\ref{lem: II.5} results also in a maximum APPT purity of $P_{\text{max}}^{\text{APPT}}  = \frac{17}{121}$.  Observe that this maximum is achieved beginning from any pure state, for which $P_0 = 1$ and $\mu_\text{min} = -1$. 
    \end{proof}
    
    The value derived in Theorem~\ref{thm: theoremII.6} is consistent with the numerical evaluation of Ref.~\cite{Dung:2025xjg}. Additionally, it provides a positive resolution to Conjecture 6.7 in Ref.~\cite{Ahiable:2026how} in the qutrit-qutrit case. Since the depolarized spectrum that realizes this maximum is also a member of ASEP (e.g., as determined by Refs.~\cite{Arunachalam:2015ywm, Vidal:1998ch}), Theorem~\ref{thm: theoremII.6} is also the maximum purity of an ASEP ensemble in qutrit-qutrit systems. 

The alternate proof of Theorem~\ref{thm: theoremII.6} also provides a strategy to directly calculate the APPT boundary of any qutrit-qutrit spectrum. Given an eigenvalue spectrum corresponding to a family of unitarily related density matrices, the effect of depolarizing noise on the associated negative eigenvalues of $L_{1,2}$ can be calculated with Eq.~\eqref{eq: depolmu}. Then, setting the left-hand side of Eq.~\eqref{eq: depolmu} to zero determines the exact boundary at which the eigenvalues become positive, corresponding to the Bloch radius where the depolarized spectrum reaches the APPT boundary. 
    
    Led by Ref.~\cite{Arunachalam:2015ywm}, the most promising strategy thus far developed for informing whether the ASEP and APPT subspaces are equivalent in qutrit-qutrit or higher dimensions has been proving via semidefinite programs (SDPs) that a variety of PnCP maps cannot detect APPT entanglement. 
Several such proofs have relied upon strategically weakening necessary and sufficient SDP conditions to aid tractability (e.g., Lemma 2 of Ref.~\cite{Arunachalam:2015ywm}). 
Employing the APPT purity upperbound of Theorem~\ref{thm: theoremII.6} as an SDP constraint provides a complementary weakened form for determining whether an entanglement witness can detect APPT entanglement. Although entanglement witnesses---$\mathcal{W}_{\chi, \Phi} = (\mathbb{I} \otimes \Phi^\dagger) |\chi\rangle\langle \chi|$ capable of observing entanglement when $\text{tr}\left[\mathcal{W}_{\chi, \Phi} \rho\right] <0$---cannot recover the full detection power of their associated PnCP map $\Phi$, this purity bound can guide numerical searches for APPT entangled ensembles, combined with Proposition 3 of Ref.~\cite{Masse:2026vps} to focus on $|\chi\rangle$ with maximal Schmidt rank or Proposition 1 of Ref.~\cite{Arunachalam:2015ywm} to focus on $\rho$ with full rank. 

For example, an SDP inspecting the ability of an entanglement witness to detect entanglement at purities below $P_{\text{max}}^{\text{APPT}}$ allowed us to find witnesses capable of detecting low-purity PPT entanglement associated with a recent extremal atomic positive map~\cite{Miller:2015hqi}, distinguished from its predecessors by its two-dimensional commutative stable subalgebra. 
These witnesses associated with the map are also found to violate the conditions in Lemma 2 of Ref.~\cite{Arunachalam:2015ywm}, suggesting that $\Phi_{\text{\scriptsize\cite{Miller:2015hqi}}}$ may be able to detect APPT entanglement. 
However, none of the entanglement witnesses sampled could detect APPT entanglement (evaluated rigorously via Equation 3 of Ref.~\cite{Arunachalam:2015ywm}). 
Of course, this numerical search over promising witnesses does not constitute a proof of whether $\Phi_{\text{\scriptsize\cite{Miller:2015hqi}}}$ is capable of detecting APPT entanglement.
Though these observations give further insight into the challenge of identifying a qutrit-qutrit APPT entangled ensemble, determining whether ASEP$_{3\times3}$ and APPT$_{3\times3}$ are equivalent subspaces remains elusive.

\section{Maximal Rank NPT Subspaces}\label{sec:NPTsubspaces}

As discussed in Section \ref{sec:negativityDecay}, the maximum number of negative PT eigenvalues supported by a bipartite Hilbert space, $\max \, n_-$, sets the maximum rate of depolarization-induced negativity decay. 
The upperbound on $\max\,n_-$ for a $d_A \times d_B$ dimensional Hilbert space has been shown to be $(d_A-1)(d_B-1)$. This upperbound was initially proven in Ref.~\cite{PhysRevA.87.054301} and was shown to be tight in Ref.~\cite{PhysRevA.87.064302}.  Following the logic of the latter, this bound on $n_-$ is governed by the maximum dimension of an NPT subspace. 
\begin{theorem}
    There exists a subspace $\mathcal{S} \subset \mathcal{H}_{d_A} \otimes \mathcal{H}_{d_B}$ of dimension $(d_A-1)(d_B-1)$ such that every density matrix with range contained in $\mathcal{S}$ has a negative partial transpose~\cite{PhysRevA.87.064302}.
\end{theorem}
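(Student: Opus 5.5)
The plan follows the construction of Ref.~\cite{PhysRevA.87.064302}. The subspace will be the span of antisymmetric-type ``minor'' vectors,
\begin{equation}
\mathcal{S} = \mathrm{span}\left\{ |i j\rangle - |i{+}1,\, j{-}1\rangle \right\},
\end{equation}
or, more generally, vectors built from adjacent index pairs. The simplest concrete choice I would commit to is
\begin{equation}
|\psi_{ij}\rangle = |i\rangle|j{+}1\rangle - |i{+}1\rangle|j\rangle, \qquad 0\le i\le d_A-2,\ 0\le j\le d_B-2 .
\end{equation}
There are exactly $(d_A-1)(d_B-1)$ of these. Each is a rank-2 (Bell-like) entangled vector. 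They are linearly independent because each contains the basis element $|i\rangle|j{+}1\rangle$, which can be ordered to be a leading term; a triangularity argument in lexicographic order settles this.

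The key reduction is the following. If $\mathcal{S}$ contains no product vector (no nonzero vector of Schmidt rank one), then every $\rho$ with range in $\mathcal{S}$ is entangled. That alone is not enough, since PPT entangled states exist, so a stronger property is needed. The identity I would use is that every $\rho$ supported on $\mathcal{S}$ has a negative expectation for $\rho^\Gamma$ on a suitable vector. Equivalently, I would exhibit a witness $W=Q^\Gamma$ with $Q\succeq 0$ such that $\mathrm{tr}[W\rho]<0$ on $\mathcal{S}$.

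Concretely, I would identify $\mathcal{H}_{d_A}\otimes\mathcal{H}_{d_B}$ with polynomials, mapping $|i\rangle|j\rangle \mapsto x^i y^j$, or equivalently with $d_A\times d_B$ matrices $M$. The space spanned by the $|\psi_{ij}\rangle$ then consists of exactly the matrices $M$ whose ``anti-diagonal sums'' vanish: $\sum_{i+j=k} M_{ij}=0$ for each $k=0,\dots,d_A+d_B-2$. Checking this is a dimension count, since $d_Ad_B-(d_A+d_B-1)=(d_A-1)(d_B-1)$. Its orthogonal complement is spanned by the vectors $|\phi_k\rangle=\sum_{i+j=k}|i\rangle|j\rangle$.

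Now use $\langle ij|\rho|k\ell\rangle=\langle i\ell|\rho^\Gamma|kj\rangle$. For a product vector $|a\rangle|b\rangle$, the quantity $\langle a\, b|\rho^\Gamma|a\, b\rangle=\langle a\,\bar b|\rho|a\,\bar b\rangle$ is not obviously useful. Instead I would take $|v\rangle=\sum_{i,j}|i\rangle|j\rangle$ (or a Vandermonde vector $|v_t\rangle=\sum_i t^i|i\rangle\otimes\sum_j t^j|j\rangle$) and compute $\langle v|\rho^\Gamma|v\rangle$. Under the partial transpose, the condition that $\rho$ is supported on $\mathcal{S}$, meaning it is annihilated by the Hankel-type functionals, should force $\mathrm{tr}[\rho^\Gamma\,\Pi]$ to be negative for a suitable projector $\Pi$. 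The cleanest version, which I would try first, is the claim that $\mathrm{tr}[\rho^\Gamma F]<0$, where $F$ is built from the flip/antisymmetric structure. Each $|\psi_{ij}\rangle\langle\psi_{ij}|^\Gamma$ has a $-1$-type contribution on a fixed antisymmetric-like vector, and the off-diagonal cross terms must be shown not to cancel this.

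The main obstacle is exactly this positivity of the cross terms. A convex combination of NPT pure states can be PPT, so I must show that no $\rho$ on $\mathcal{S}$ has $\rho^\Gamma\succeq 0$.

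As a fallback I would argue by contradiction. Suppose $\rho^\Gamma\succeq 0$ with range of $\rho$ in $\mathcal{S}$. Then $\rho$ is PPT and hence, via a range/kernel argument, the kernel of $\rho^\Gamma$ must contain the partial conjugates of the vectors orthogonal to $\mathcal{S}$, namely the Hankel vectors $|\phi_k\rangle$. The aim would be to derive $\rho=0$ using the Hankel structure and an induction on $d_A+d_B$: peel off the corner indices $(0,0)$ and $(d_A-1,d_B-1)$, where $|\phi_k\rangle$ is a single product vector $|0\rangle|0\rangle$. At such a corner the diagonal element $\langle 00|\rho|00\rangle=\langle 00|\rho^\Gamma|00\rangle$ must vanish, and positivity of $\rho^\Gamma$ then kills the entire corresponding row. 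That row-killing reduces the problem to smaller dimensions, closing the induction.
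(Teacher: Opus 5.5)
Your subspace, the triangularity argument for linear independence, and the description of $\mathcal{S}$ as the matrices with vanishing antidiagonal sums (so $\mathcal{S}^\perp$ is spanned by the Hankel vectors $|\phi_k\rangle$) agree with the paper's setup. The gap is that neither of your two routes reaches a conclusion.

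\textbf{The witness route.} Both test vectors you name, $\sum_{ij}|i\rangle|j\rangle=u_1\otimes u_1$ and $u_t\otimes u_t$ with $u_t=\sum_i t^i|i\rangle$, are product vectors. As you observe yourself, $\langle ab|\rho^\Gamma|ab\rangle=\langle a\bar b|\rho|a\bar b\rangle\ge 0$ for such vectors. For real $t$ the value is exactly $0$, because $u_t\otimes u_t\in\mathcal{S}^\perp$. There is also no fixed ``antisymmetric-like'' vector on which every $(|\psi_{ij}\rangle\langle\psi_{ij}|)^\Gamma$ is negative: its negative eigenvector is $|i,j\rangle+|i{+}1,j{+}1\rangle$, which moves with $(i,j)$. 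This is why the paper instead selects, depending on $\rho$, a $2\times 2$ principal submatrix of $\rho^\Gamma$ with a vanishing diagonal entry and a nonzero off-diagonal entry.

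\textbf{The fallback route.} The opening move is correct: $|00\rangle\perp\mathcal{S}$ gives $\langle 00|\rho^\Gamma|00\rangle=\langle 00|\rho|00\rangle=0$, so $\rho^\Gamma\succeq 0$ kills that row. But ``row-killing reduces the problem to smaller dimensions'' is where the whole proof lives, and it does not follow as stated. The killed row consists of $\langle 00|\rho^\Gamma|jk\rangle=\langle 0k|\rho|j0\rangle$, which are only off-diagonal couplings of $\rho$. So no coordinate leaves the support of $\rho$, and the surviving support is not a smaller rectangular instance on which to induct in $d_A+d_B$.

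The missing ingredient is that every pure component of $\rho=\sum_i p_i|\psi_i\rangle\langle\psi_i|$ has zero antidiagonal sums, used inductively in the antidiagonal index $k$:
\begin{itemize}
\item Suppose $\langle ab|\rho|ab\rangle=0$ whenever $a+b<k$. Those diagonal entries of $\rho^\Gamma$ then vanish, so positivity gives $\langle ab'|\rho|a'b\rangle=0$ for all $a',b'$.
\item Take two positions $(x_1,y_1),(x_2,y_2)$ on antidiagonal $k$ with $x_1<x_2$. The choice $a=x_1$, $b=y_2$ has $a+b<k$, so the block $G_{xy}=\sum_i p_i\psi_i(x)\overline{\psi_i(y)}$ of $\rho$ on that antidiagonal is diagonal.
\item Then $0=\sum_i p_i\bigl|\sum_x\psi_i(x)\bigr|^2=\sum_{x,y}G_{xy}=\sum_x G_{xx}$ forces $G=0$. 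The induction advances, and eventually $\mathrm{tr}\,\rho=0$, a contradiction.
\end{itemize}
This is the contrapositive of the paper's argument, which exhibits a nonzero $\sum_i p_i\bar\alpha_i\beta_i$ on the first nonvanishing antidiagonal.

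Your claim that $\ker\rho^\Gamma$ contains the Hankel vectors is also unjustified as phrased, since $|\phi_k\rangle$ is not a product vector for $0<k<d_A+d_B-2$. It does follow from the product vectors $u_t\otimes u_t\in\mathcal{S}^\perp$ with $t\in\mathbb{R}$. Taking $t\in\mathbb{C}$ instead would have closed your fallback in one line. Since $\langle u_t\otimes\bar u_t|\rho^\Gamma|u_t\otimes\bar u_t\rangle=\langle u_t\otimes u_t|\rho|u_t\otimes u_t\rangle=0$, PPT would put every $\sum_{ij}t^i\bar t^j|ij\rangle$ in $\ker\rho^\Gamma$. The monomials $t^i\bar t^j$ are linearly independent, so these vectors span the whole space, forcing $\rho^\Gamma=0$.
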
 
\begin{proof}
The following simply adds intermediate detail to the complete proof of Theorem 1 in Ref.~\cite{PhysRevA.87.064302}, which develops a particular maximal NPT subspace
\begin{multline}
    \mathcal{S} \equiv \operatorname{span}\left\{ \frac{|j\rangle |k+1\rangle-|j+1\rangle|k\rangle}{\sqrt{2}}: \right.\\ \left. 0\leq j \leq d_A-2, 0 \leq k \leq d_B-2 \right\} \ \ .
    \label{spanning}
\end{multline}
While these basis vectors are not orthogonal, their linear independence spans $(d_A-1)(d_B-1)$ dimensions.
A state within this subspace can thus be characterized by a $(d_A-1)(d_B-1)$-dimensional vector $v$ as
\begin{equation}
|v\rangle = \frac{1}{\sqrt{N_0}} \sum_{jk} v_{jk} |\mathcal{S}_{jk}\rangle \ \ \ ,
\end{equation}
with a normalization factor 
\begin{equation}
N_0 = |\mathbf{v}|^2 - \sum_{j=0}^{d_A-3} \sum_{k=1}^{d_B-2} \Re\left( v_{jk} v*_{(j+1)(k-1)}\right) \ .
\label{eq:normalization}
\end{equation}
This relation frames the unit-norm condition on states in Hilbert space that superpositions of the
spanning vectors of $\mathcal{S}$ must satisfy. 

Rather than confirming the NPT classification of density matrices in $\mathcal{S}$ through eigenvalue calculation, this approach proceeds by proving that the partial transpose of such ensembles always contains a two-dimensional principal submatrix with negative determinant~\cite{PhysRevA.87.064302}, i.e., for all $\rho \equiv\sum_i p_i |v_i\rangle \langle v_i| \in \mathcal{S}$ there exist two-dimensional bipartite coordinates $\mathbf{a},\mathbf{b}$ such that $\det\left( \rho^\Gamma_{\mathbf{a}\mathbf{b}} \right) < 0$.
This is a sufficient condition for the presence of a negative PT eigenvalue.

As Hermiticity persists upon partial transposition, the principle submatrix $\rho^\Gamma_{\mathbf{a}\mathbf{b}}$ has off-diagonal entries related by a complex conjugate. 
In the case that $\mathbf{a},\mathbf{b}$ are chosen to make the upper left $\rho^\Gamma_{\mathbf{a}\mathbf{b}}$ element vanish, the submatrix determinant will be non-positive, $\det\left( \rho_{\mathbf{a}\mathbf{b}}\right) \leq 0$. 
Thus, this approach translates to proving that there exists $\mathbf{a},\mathbf{b}$ that simultaneously satisfies $\langle a_0 b_0| \rho^\Gamma | a_0 b_0\rangle=0$ and $\langle a_0 b_0| \rho^\Gamma | a_1 b_1\rangle\neq0$.
While the first condition would be trivially satisfied by $a_0 = b_0 = 0$ as $|00\rangle \notin \mathcal{S}$, the second condition cannot always be satisfied with this choice, e.g., the state $|v\rangle  = |\mathcal{S}_{01}\rangle$ has a single $2 \times 2$ principal submatrix of negative determinant located at $\mathbf{a} = \left(0,1\right)$ and $\mathbf{b} = \left(1,2\right)$ in its partially transposed density matrix.

To analyze the simultaneous viability of both these conditions, it is convenient to relate this $\rho^\Gamma_{\mathbf{a}\mathbf{b}}$ principal submatrix to a non-principal submatrix of a matrix embedding of the pure states within the convex decomposition of $\rho$.
 
First, characterizing the $\rho^\Gamma_{\mathbf{a}\mathbf{b}}$ submatrix, note that any state $|v\rangle \in \mathcal{S}$ can be written in the computational basis, 
\begin{equation}
|v_i\rangle = \sum_{j,k,\ell} v^{(i)}_\ell c^{(\ell)}_{jk} |j\rangle|k\rangle \ \ \ ,
\end{equation}
where $c^{(\ell)}$ is a $d_Ad_B$-dimensional vector containing two non-zero elements $\pm 1/\sqrt{2}$ corresponding to the $\ell^{\text{th}}$ binary-indexed spanning vector of $\mathcal{S}$, and the normalization factor of Eq.~\eqref{eq:normalization} has been embedded into $v$ from now on.
This leads to a partially transposed density matrix of the form
\begin{multline}
\left(|v_i\rangle\langle v_i|\right)^\Gamma
= \sum_{j,k,\ell,j',k',\ell'} 
v_\ell^{(i)} \overline{v_{\ell'}^{(i)}} 
c_{jk}^{(\ell)} \overline{c_{j'k'}^{(\ell')}}  \\
|j\rangle\langle j'| \otimes |k'\rangle\langle k| \ \ \ ,
\end{multline}
with matrix elements composed of sum products 
\begin{multline}
\langle j k| \left(|v_i\rangle\langle v_i|\right)^\Gamma | j' k'\rangle  = \\  \left(\sum_{\ell} v_\ell^{(i)} c_{jk'}^{(\ell)} \right) \left(\sum_{\ell'}  \overline{v_{\ell'}^{(i)}} \
\overline{c_{j'k}^{(\ell')}}\right) 
\label{eq:rhoGelements}
\end{multline}
indexed by combined $|jk\rangle$ bipartite coordinates, which span $|a_0b_0\rangle$ and $|a_1b_1\rangle$ in the $\mathbf{a}, \mathbf{b}$ subspace.  

Second, consider an isomorphism $\phi: \mathbb{C}^{d_A} \otimes \mathbb{C}^{d_B} \rightarrow M_{d_B \times d_A}$ as the linear map that sends the computational basis state $|i\rangle \otimes |j\rangle$ to the matrix, $|j\rangle\langle i|$. This map is familiar from the proof of the Schmidt decomposition, though more commonly transposed.  
Acting this isomorphism on a state in $\mathcal{S}$ yields
\begin{equation}
\phi(|v_i\rangle) = \sum_{j,k,\ell} v_\ell^{(i)} c_{jk}^{(\ell)} |k\rangle\langle j| \ \ \ .
\end{equation}
A (generally non-principal) submatrix characterized by $\mathbf{a},\mathbf{b}$ may be selected as the two columns of $\mathbf{a}$ and two rows of $\mathbf{b}$,
\begin{align}
\phi(|v_i\rangle)_{\mathbf{a}\mathbf{b}}&=
\begin{pmatrix}
\sum_\ell v_\ell^{(i)} c_{a_0 b_0}^{(\ell)} 
& \sum_\ell v_\ell^{(i)} c_{a_1 b_0}^{(\ell)} \\
\sum_\ell v_\ell^{(i)} c_{a_0 b_1}^{(\ell)} 
& \sum_\ell v_\ell^{(i)} c_{a_1 b_1}^{(\ell)}
\end{pmatrix}
\\
&\equiv
\begin{pmatrix}
0 & \beta_i \\
\alpha_i & \gamma_i
\end{pmatrix} \ \ \ .
\end{align}
Referring to Eq.~\eqref{eq:rhoGelements},
these $\phi(|v_i\rangle)_{\mathbf{a}\mathbf{b}}$ submatrix elements are recognized as components of the $\rho^\Gamma_{\mathbf{a}\mathbf{b}}$ principal submatrix,
\begin{equation}
\rho^\Gamma_{\mathbf{a}\mathbf{b}} = \sum_i p_i \begin{pmatrix}
0  & \alpha_i \overline{\beta_i} \\
\overline{\alpha_i} \beta_i & |\gamma_i|^2
\end{pmatrix} ,
\end{equation}
 with a determinant of
\begin{equation}
\label{eq: negdet}
    -\left|\sum _i p_i \overline{\alpha}_i \beta_i\right|^2 \leq 0 \ \ \ .
\end{equation}
To then prove that such a subspace is NPT, it is sufficient to show that Eq.~\eqref{eq: negdet} becomes a strict inequality, i.e., for every ensemble $\rho \in \mathcal{S}$ there exists $\mathbf{a},\mathbf{b}$ such that $\sum_i p_i\overline{\alpha}_i \beta_i \neq 0$. 

Notice that because the $\mathcal{S}$ spanning states of Eq.~\eqref{spanning} are superpositions of two computational basis states with constant sum of $A$- and $B$-indices, $\phi\left(\mathcal{S}\right)$ is a set of $d_B \times  d_A$ matrices that have two non-vanishing matrix elements $\pm 1/\sqrt{2}$ located within a single antidiagonal, such that each of the $(d_A+d_B-1)$ antidiagonals sum to 0. 
Since this isomorphism is linear, $\phi\left(|v\rangle\right)$ continues to have antidiagonals that sum to zero $\forall |v\rangle \in \mathcal{S}$. 
These properties exhibited by the image of states in $\mathcal{S}$ allowed Ref.~\cite{PhysRevA.87.064302} to construct a succinct proof-by-contradiction to show that there must exist a set of indices, e.g., located along the leftmost antidiagonal that is nonzero for at least one $\phi\left(|v_i\rangle\right)$ with $a_0b_1$ chosen to be its leftmost nonzero element, that satisfy $\sum_i p_i\overline{\alpha}_i \beta_i \neq 0$.
\end{proof}

For pure states $|v\rangle \in \mathcal{S}$, exact locations of $2 \times 2$ negative-determinant $\rho^\Gamma$ submatrices can be quickly identified by inspection of the state's image under the action of $\phi$.  For example, pairs of populated entries of the leftmost nonzero antidiagonal have locations corresponding to $a_0b_1$ and $a_1b_0$, defining $2\times 2$ subspaces of the partially transposed density matrix that contain a negative eigenvalue.

For mixed ensembles $\rho \in \mathcal{S}$, the location of a $2\times 2$ negative-determinant $\rho^\Gamma$ subspace can be declared solely from inspection of the $\phi(|v_i\rangle)$ images only in limited applications, such as rank-2 mixtures in $d = 3$, due to the cancellations that can arise through the classical mixture.
In general, one can modestly extend the strategy developed in Ref.~\cite{PhysRevA.87.064302} to set $a_0 b_1$ (or $a_1 b_0$) by the leftmost (or rightmost) nonzero element of the shortest nonzero antidiagonal over the images of pure states in the ensemble and calculate $\sum_i p_i \bar{\alpha}_i \beta_i$ with $a_1b_0$ (or $a_0b_1$) defined by each other element of this antidiagonal.
This constrained search can be completed in a number of $\sum_i p_i \bar{\alpha}_i \beta_i$ evaluations that is one fewer than the length of the smallest nonzero antidiagonal, at most $d-1$.

Along with this theorem, Ref. \cite{PhysRevA.87.064302} shows by construction via SDP that there exists an ensemble $\rho_{\max n_-} \in \mathcal{S}$ that saturates the negativity rank upperbound of $(d_A-1)(d_B-1)$.
This number is strictly greater than the maximum negativity rank of pure quantum states---recalling that a pure state of Schmidt rank $s\leq\min\left(d_A, d_B\right)$ has $n_- = \frac{s(s-1)}{2}$ negative eigenvalues in its partial transpose.
From Eq.~\eqref{eq:recurrence}, global depolarizing noise will always decrease the negativity rank. 
Therefore, mixed ensembles with more negative eigenvalues than the upperbound for pure states will never be reached by depolarizing. 
To show that this inaccessibility is not a general property of quantum noise, the following constructs a simple example of a channel that can increase the negativity rank beginning from a pure ensemble. 
Let $|\lambda_i\rangle$ be the orthonormalized eigenstates of $\rho_{\max n_-}$. 
Consider the $d_Ad_B$ Kraus operators 
\begin{subequations}
\label{eq:KrausRho9}
\begin{align}
	K_0 &= \sqrt{\lambda_0} \ \mathbb{I} \\
    K_{i>0}&=\sqrt{\lambda_i} \Biggl( |\lambda_i \rangle \langle\lambda_0|+|\lambda_0 \rangle \langle\lambda_i|+ \Biggr. \\ \Biggl. & \hspace{2.5cm} \sum^{d_Ad_B-1}_{j \neq 0, i} |\lambda_j \rangle \langle\lambda_j| \Biggr) \ \ \ , \nonumber
\end{align}
\end{subequations}
leading to a completely positive trace preserving (CPTP) map, $\sum_j K_j^\dagger K_j = \mathbb{I}$, in the form of $\mathcal{E}(\rho) \equiv 	\sum_j K_j\rho K_j^\dagger$.
This channel performs $\mathcal{E}\left( |\lambda_0\rangle\langle \lambda_0|\right) = \rho_{\max n_-}$, increasing the negativity rank regardless of the particular unique or degenerate eigenvector chosen to be $|\lambda_0\rangle$. 
Though the negativity itself is monotonically decreasing under CPTP maps \cite{Eisert:2006kue}, this monotonicity does not extend to the negativity rank.

Unlike that of Eq.~\eqref{eq:KrausRho9}, many common noise channels are local, i.e.,  can be decomposed into channels acting on the A and B Hilbert spaces individually. 
While the negativity rank is found to be nonincreasing under many of the common local noise channels, this limited directionality is also not a general feature of local noise channels. 
For example, by sampling random Kraus operators with the techniques developed by Ref.~\cite{Kukulski:2021uos}, it is common to find local noise channels that are capable of increasing the negativity rank in qubit-qutrit ensembles.
For a concrete qudit-qudit example, consider a pair of Kraus operators, $K_0 = \frac{1}{\sqrt{2}}\mathbb{I}$ and $K_1 = \frac{1}{\sqrt{2}}\left(U_+^2 \otimes U_+^2\right)$, where $U_+$ is the cyclic raising operator in the single-qudit Hilbert space.
Applied with $d\geq 4$ to $\rho$ a Bell state embedded in neighboring qudit levels, $n_-\left(\rho\right) = 1$ increases to $n_-\left( \mathcal{E}\left[ \rho\right]\right) = 2$ as entanglement is probabilistically shifted to a second subspace above the original in the periodically ordered qudits. 
Though the negativity spectrum is fully invariant under local unitary operations, its negativity rank is neither stable nor monotonic under LOCC.

\section{Robustness to Statewise Mixture}
\label{sec: IV}

While higher negativity rank was shown in Section~\ref{sec:negativityDecay} to correspond to more rapid negativity decay under the global depolarizing noise channel, higher negativity rank can also correspond to greater robustness of negativity in the presence of more targeted noise channels.  
In this section, we determine tight bounds on the number of arbitrary separable (\S\ref{sec:sepmix}) or entangled (\S\ref{sec:entmix}) states that can be mixed with an NPT pure state while retaining the NPT classification of the ensemble.
\begin{figure}[t]
\includegraphics[width=0.48\textwidth]{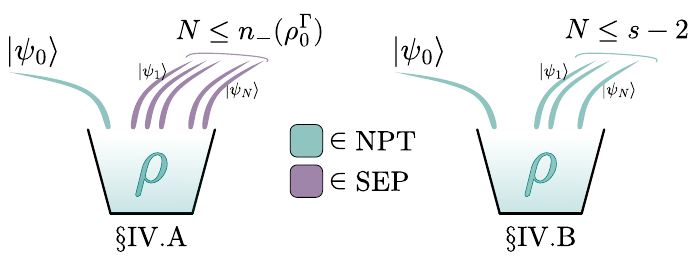}
\caption{Diagram of Section~\ref{sec: IV} results. The maximum number of arbitrary separable (left) or entangled (right) pure ensembles that can be mixed into that of $|\psi_0\rangle$ such that the resulting ensemble $\rho \in \text{NPT}$ is governed by the $|\psi_0\rangle$ Schmidt rank $s$, with the former equal to the negativity rank $n_-\left(\rho_0^\Gamma\right) = \frac{s(s-1)}{2}$. }
\label{fig:mixdiagram}
\end{figure}
Subject to this NPT retention constraint, as summarized in Fig.~\ref{fig:mixdiagram}, 
these two channels can maximally introduce a number of pure ensembles that scales linearly and as a square-root, respectively, of the original state's negativity rank.

\subsection{Separable Mixtures}
\label{sec:sepmix}
Suppose $|\psi_0\rangle \in \text{NPT}$ is a bipartite entangled pure state of two qudits. For an arbitrary set of separable states $\{|\psi_i\rangle \in \text{SEP for} \, i=1,...,N\}$, consider the following mixed state
\begin{equation}
    \rho=\lambda_0|\psi_0\rangle\langle\psi_0|+\sum_{i=1}^N\lambda_i|\psi_i\rangle\langle\psi_i| \ \ \ 
  \label{eq:mixture}
\end{equation}
for arbitrary probabilities satisfying $\sum_{i=0}^{N}\lambda_i = 1$.
The question we are interested in solving is: what is the largest value of $N$ for which $\rho$ remains NPT. 
Previous studies have established a lower bound to this maximal $N$ of $\frac{s(s-1)}{2}-1$, where $s$ is the Schmidt rank of the initial entangled pure state $|\psi_0\rangle$ \cite{Hua:2013sos}. 
In this section, we extend this bound by one and show that in fact $N$ can be as large as $\frac{s(s-1)}{2}$, i.e., $N$ is inclusively upperbounded by the negativity rank of $|\psi_0\rangle$. Note that this bound is now tight, as previous studies have given specific counterexamples to any further extension of the bound \cite{Hua:2013sos}.

To begin, consider derivation of the previous lowerbound to the maximal $N$, Theorem 2 of Ref.~\cite{Hua:2013sos},
emphasizing its restriction to the introduction of separable states. 
\begin{theorem} \label{thm:Theorem 1}
A mixed ensemble in the form of Eq.~\eqref{eq:mixture} composed of an entangled pure state $|\psi_0\rangle \in \text{NPT}$ with Schmidt rank $s$ and $N\leq \frac{s(s-1)}{2}-1 = n_-\left(\rho_0^\Gamma\right)-1$ additional arbitrary pure separable states is NPT. 
\end{theorem}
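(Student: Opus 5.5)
The approach is a dimension-counting argument in the negative eigenspace of $\rho_0^\Gamma$, where $\rho_0 = |\psi_0\rangle\langle\psi_0|$. Write $|\psi_0\rangle = \sum_{k=1}^s \sqrt{c_k}\,|kk\rangle$ in its Schmidt basis, which we may adopt without loss of generality since local unitaries preserve both separability of the added states and the NPT property. The partial transpose $\rho_0^\Gamma$ then acts as $\sqrt{c_kc_l}$ times the swap on $\mathrm{span}\{|kl\rangle,|lk\rangle\}$. Hence it has exactly $n_- = s(s-1)/2$ negative eigenvalues $-\sqrt{c_kc_l}$, with eigenvectors given by the antisymmetric combinations $|\chi_{kl}\rangle = (|kl\rangle - |lk\rangle)/\sqrt{2}$ for $k<l\le s$. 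Call their span $V_-$, so that $\dim V_- = n_-$ and $\langle \chi|\rho_0^\Gamma|\chi\rangle<0$ for every nonzero $|\chi\rangle\in V_-$.

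Next, each separable pure state $|\psi_i\rangle = |a_i\rangle|b_i\rangle$ has partial transpose $|a_i\rangle\langle a_i|\otimes |\bar b_i\rangle\langle \bar b_i| = |a_i\bar b_i\rangle\langle a_i\bar b_i|$. This is again a rank-one projector onto a product vector, so it is positive semidefinite with one-dimensional range. Therefore
\begin{equation}
\rho^\Gamma = \lambda_0\rho_0^\Gamma + \sum_{i=1}^N \lambda_i |a_i\bar b_i\rangle\langle a_i\bar b_i| \ \ \ .
\end{equation}
The orthogonality conditions $\langle a_i\bar b_i|\chi\rangle = 0$ for $i=1,\dots,N$ are $N$ linear constraints on $V_-$. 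When $N\le n_- - 1$, they leave a subspace of $V_-$ of dimension at least $n_- - N\ge 1$. For any nonzero $|\chi\rangle$ in that subspace,
\begin{equation}
\langle\chi|\rho^\Gamma|\chi\rangle = \lambda_0\langle \chi|\rho_0^\Gamma|\chi\rangle <0
\end{equation}
provided $\lambda_0>0$, which is implicit in calling $|\psi_0\rangle$ part of the mixture. By the Rayleigh quotient, $\rho^\Gamma$ then has a negative eigenvalue, so $\rho$ is NPT.

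There is no serious obstacle at this level. The only point needing care is the spectral decomposition of $\rho_0^\Gamma$: one must confirm that the negative eigenspace has dimension exactly $s(s-1)/2$ and that the quadratic form is strictly negative on all of $V_-$, not merely on the basis vectors. This holds because $V_-$ is spanned by eigenvectors with strictly negative eigenvalues, all $c_k>0$ for $k\le s$. I would also note that separability of the $|\psi_i\rangle$ is used precisely to make each perturbation rank one. For entangled additions, $(|\psi_i\rangle\langle\psi_i|)^\Gamma$ can have rank up to $s_i^2$ with negative parts, and this is why the counting changes in \S\ref{sec:entmix}. The argument also suggests the extension to $N=n_-$ discussed next: in the boundary case the constraints may leave $\{0\}$, and one must argue separately, for example via the full-rank restriction of $\rho^\Gamma$ to $V_-$ together with a determinant or perturbation argument.
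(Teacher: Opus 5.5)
Your proof is correct and takes essentially the paper's route: both rest on the negative eigenspace of $\rho_0^\Gamma$ having dimension $s(s-1)/2$ and on each separable state adding a rank-one positive term to $\rho^\Gamma$. The only difference is that the paper cites the inertia bound $n_-(\rho^\Gamma)\geq n_-(\rho_0^\Gamma)-N$ from Marques de S\'a, whereas you prove the needed case directly by intersecting $V_-$ with the orthogonal complement of the $N$ product vectors and applying the Rayleigh quotient; by Courant--Fischer, the resulting $(n_- - N)$-dimensional subspace on which the form is negative definite gives that same bound.
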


\begin{proof}
   Since $\rho_0^\Gamma \equiv (|\psi_0\rangle \langle \psi_0|)^\Gamma$ is Hermitian, its eigenvectors are orthogonal. Therefore, the eigenvectors associated with negative eigenvalues will span a partially transposed (PT) subspace 
of dimension $n_-\left(\rho_0^\Gamma\right) = \frac{s(s-1)}{2}$. 
Because PT density matrices of pure separable states remain rank-1 (see Eqs.~\eqref{eq:psiotimes}--\eqref{eq:psiotimesGamma}), the PT space spanned by the $N$ arbitrary separable states will be at most $N$.
In this way, $\rho^\Gamma$ can be appreciated as a weighted sum of two Hermitian matrices: $\rho_0^\Gamma$ and a positive semidefinite matrix of rank upperbounded by $N$. 
Via inertia analysis~\cite{Marquesdesa1981143}, the rank of the PT separable components in the mixture allowed by the present theorem is thus insufficient to eliminate all negative eigenvalues,
$n_-\left(\rho^\Gamma\right) \geq n_-\left(\rho_0^\Gamma\right) - N \geq 1$, such that $\rho \in \text{NPT}$.  

\end{proof}
\noindent Note that a key property of this proof---preservation of rank-1 structure for PT pure separable state density matrices---does not extend to entangled states. 
For example, $\text{rank}\left[\left(|\Psi\rangle\langle \Psi|\right)^\Gamma\right] = 4$, where $|\Psi\rangle$ is a two-level Bell state embedded in arbitrary dimensions.
This increase in rank of entangled state PT density matrices leads to a parametrically smaller upperbound on $N$ when $\left\{|\psi_i\rangle \in \text{NPT}\right\}$, as discussed in Section~\ref{sec:entmix}. 

Observe that the proof of Theorem~\ref{thm:Theorem 1} does not rely on the purity of $\rho_0$, and thus immediately extends to initially entangled mixed ensembles.  
\begin{corollary}\label{cor:rho0mixedApp}
A mixed ensemble $\rho$ composed of an entangled mixed ensemble $\rho_0 \in \text{NPT}$ and $N\leq  n_-\left(\rho_0^\Gamma\right)-1$ additional arbitrary pure separable states is NPT. 
\end{corollary}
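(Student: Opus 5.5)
The plan is to rerun the inertia argument of Theorem~\ref{thm:Theorem 1} with $\rho_0$ now an arbitrary mixed ensemble. Write the mixture as
\begin{equation}
\rho = \lambda_0 \rho_0 + \sum_{i=1}^N \lambda_i |\psi_i\rangle\langle\psi_i|,
\end{equation}
with $\lambda_0>0$ and each $|\psi_i\rangle = |a_i\rangle|b_i\rangle$ a pure product state. If $\lambda_0=0$ the statement has no content, since $\rho_0$ is then not part of the mixture, so I assume $\lambda_0>0$. By linearity of the partial transpose,
\begin{equation}
\rho^\Gamma = \lambda_0 \rho_0^\Gamma + \sum_{i=1}^N \lambda_i \left(|\psi_i\rangle\langle\psi_i|\right)^\Gamma.
\end{equation}

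The first step is to establish the rank-1 property for the separable terms, as in the earlier proof:
\begin{equation}
\left(|a_i\rangle\langle a_i|\otimes|b_i\rangle\langle b_i|\right)^\Gamma = |a_i\rangle\langle a_i|\otimes|\overline{b_i}\rangle\langle \overline{b_i}|,
\end{equation}
which is again a rank-1 positive semidefinite projector. Consequently $Q \equiv \sum_i \lambda_i (|\psi_i\rangle\langle\psi_i|)^\Gamma$ is positive semidefinite with $\mathrm{rank}\,Q \le N$.

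The second step is the inertia bound, which I would make self-contained through Courant--Fischer rather than only cite Ref.~\cite{Marquesdesa1981143}. Let $V$ be the span of the eigenvectors of $\rho_0^\Gamma$ with negative eigenvalues, so $\dim V = n_-(\rho_0^\Gamma)$. The subspace $W = V \cap \ker Q$ has dimension at least $n_-(\rho_0^\Gamma) - N \ge 1$. For every nonzero $w\in W$,
\begin{equation}
\langle w|\rho^\Gamma|w\rangle = \lambda_0\langle w|\rho_0^\Gamma|w\rangle < 0.
\end{equation}
By min–max, $\rho^\Gamma$ therefore has at least $\dim W \ge 1$ negative eigenvalues, so $\rho\in$ NPT.

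None of this argument uses the purity of $\rho_0$; it uses only that $\rho_0^\Gamma$ is Hermitian with $n_-(\rho_0^\Gamma)$ negative eigenvalues, which holds for any ensemble. The only step needing care is the inertia/dimension-counting step, and in particular making sure the strict negativity survives. It does: on $W$ the perturbation $Q$ vanishes identically, and $\lambda_0>0$ keeps the restricted quadratic form negative definite. The bound is sharp in the counting sense, since $N = n_-(\rho_0^\Gamma)$ rank-1 terms could in principle cover $V$ entirely.
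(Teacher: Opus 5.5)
Your proposal is correct and follows the paper's route: the paper proves this corollary by observing that the argument of Theorem~\ref{thm:Theorem 1} never uses the purity of $\rho_0$. That argument combines rank-one partial transposes of product states with the inertia bound $n_-(\rho^\Gamma)\geq n_-(\rho_0^\Gamma)-N$, and your Courant--Fischer count on $V\cap\ker Q$ simply makes the cited inertia step self-contained. Your closing sharpness remark applies only to the dimension count, not to the statement itself: for pure $\rho_0$, the paper's Lemma~\ref{lem:key} shows that a single product state cannot lower the negativity rank, which pushes the bound up to $N=n_-(\rho_0^\Gamma)$.
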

\noindent The subsequent question of extending the upperbound on $N$ in Theorem~\ref{thm:Theorem 1} to the negativity rank of $|\psi_0\rangle$ thus relates to the ability of separable ensembles to compensate for the $n_-\left(\rho_0^\Gamma\right)$ negativity contributing eigenvectors of $\rho_0^\Gamma$.
While the former are tensor product in the PT space, the latter are not.
This structural incompatibility will be found to allow extension of the upperbound beyond general inertia or rank analyses.  

To see this incompatibility arising when $\rho_0$ is pure, we prove that the negativity rank $n_-\left( \rho_0^\Gamma\right)$ is invariant upon mixture with a single separable pure state.
\begin{lemma}\label{lem:key}
A mixed ensemble in the form of Eq.~\eqref{eq:mixture} with $N$ = 1 composed of  $|\psi_0\rangle \in \text{NPT}$ and $|\psi_\otimes\rangle  \in \text{SEP}$ will retain its negativity rank $n_-\left( \rho_0^\Gamma\right) = n_-\left(\rho^\Gamma\right)$.
\end{lemma}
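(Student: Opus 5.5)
The plan is to reduce the statement to an inertia question for a rank-one positive update of $\rho_0^\Gamma$. Since multiplying by $\lambda_0>0$ does not change the inertia, it suffices to study $\rho^\Gamma/\lambda_0 = A + t\,vv^\dagger$, where $A\equiv\rho_0^\Gamma$ and $t=\lambda_1/\lambda_0\ge 0$. For the product state $|\psi_\otimes\rangle = |a\rangle|b\rangle$, the partial transpose is $|a\rangle\langle a|\otimes|\bar b\rangle\langle \bar b|$, so $v = |a\rangle|\bar b\rangle$ is again a product vector (cf. Eqs.~\eqref{eq:psiotimes}--\eqref{eq:psiotimesGamma}).

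Adding a positive semidefinite rank-one term can only shift each eigenvalue upward, and by at most one interlacing step. Hence $n_-(\rho^\Gamma)\in\{n_-(\rho_0^\Gamma)-1,\, n_-(\rho_0^\Gamma)\}$, and the whole task is to exclude the decrement. I will use local unitaries to write $|\psi_0\rangle=\sum_{k<s}c_k|kk\rangle$ with $c_k>0$. These local unitaries do not change the inertia of the partial transpose, and they preserve the product form of $v$. In this basis,
\begin{equation}
A=\sum_{j,k<s}c_jc_k\,|jk\rangle\langle kj| = (D\otimes D)F
\end{equation}
on the block $B=\operatorname{span}\{|jk\rangle: j,k<s\}$, and $A=0$ on $B^\perp$. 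Here $D=\operatorname{diag}(c_0,\dots,c_{s-1})$ and $F$ is the swap. In particular the range of $A$ is $B$, its kernel is $B^\perp$, and $n_-(A)=s(s-1)/2$.

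The proof then splits into two cases. In the first case, $v\notin B$, so $v$ has a nonzero component $v_k$ in $\ker A$. Let $N$ be the negative eigenspace of $A$. For each $w\in N$ define $w'=w+\alpha_w v_k$ with $\alpha_w=-\langle v,w\rangle/\|v_k\|^2$, chosen so that $\langle v,w'\rangle=0$. Because $Av_k=0$ and $N\perp\ker A$, we get $w'^\dagger(A+t vv^\dagger)w' = w^\dagger A w<0$. The map $w\mapsto w'$ is injective, so it produces an $n_-(A)$-dimensional subspace on which $A+tvv^\dagger$ is negative definite. By the min-max principle, $n_-(\rho^\Gamma)\ge n_-(A)$.

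In the second case, $v\in B$, and I restrict everything to $B$, where $A$ is invertible with $A^{-1}=F(D^{-1}\otimes D^{-1})$. The matrix determinant lemma gives $\det(A+tvv^\dagger)=\det A\,(1+t\,v^\dagger A^{-1}v)$ on $B$. The key computation uses the fact that $v=x\otimes y$ is a product vector:
\begin{equation}
v^\dagger A^{-1} v = \langle x\otimes y|F|D^{-1}x\otimes D^{-1}y\rangle = |x^\dagger D^{-1} y|^2 \ge 0 .
\end{equation}
Therefore the determinant never vanishes along the path $t\in[0,\infty)$. By continuity of the eigenvalues in $t$, no eigenvalue of $A+tvv^\dagger$ on $B$ can cross zero. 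Since $v\in B$, the complement $B^\perp$ stays a zero eigenspace, so the negativity rank is unchanged for every $t$.

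The main obstacle is precisely this second case. There, inertia or rank counting alone permits the loss of one negative eigenvalue, and the argument must instead exploit the tensor-product structure of $v$ through the positivity of $v^\dagger A^{-1}v$. This is where the "structural incompatibility" between product vectors and the entangled negative eigenvectors of $\rho_0^\Gamma$ enters. I would double-check the explicit form of $A^{-1}$ and the handling of complex conjugation in $v=|a\rangle|\bar b\rangle$. Neither should affect the sign, since the quadratic form collapses to a modulus squared.
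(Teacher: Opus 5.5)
Your proof is correct, and its core coincides with the paper's. Both arguments apply the matrix determinant lemma on the $s^2$-dimensional support $B$ of $\rho_0^\Gamma$. Both then use the product form of $|\psi_\otimes^\Gamma\rangle$ to show that $v^\dagger A^{-1}v$ is a modulus squared, so the determinant never changes sign. Your $|x^\dagger D^{-1}y|^2$ is exactly the paper's Eq.~\eqref{eq:s2Req}, namely $|\sum_j a_j b_j/\mu_j|^2$.

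The differences are in bookkeeping. The paper works in the eigenbasis of $\rho_0^\Gamma$ and always passes to the leading $s^2\times s^2$ principal block. In effect it replaces $v$ by its compression $P_B v$, which is again a product vector. It concludes that $n_-$ is unchanged on that block from the determinant's sign parity plus the at-most-one inertia shift of a rank-one update. It then lifts to the full space via Cauchy interlacing, $n_-(M)\geq n_-(M_{s^2})$, together with the fact that a PSD addition cannot increase $n_-$.

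You instead split on whether $v\in B$. When $v$ has a component in $\ker A$, your min-max construction shows directly that no negative eigenvalue is lost. That fact holds for any rank-one PSD update and needs no product structure. When $v\in B$, block-diagonality and continuity in $t$ finish the job.

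Your operator form $A=(D\otimes D)F$, $A^{-1}=F(D^{-1}\otimes D^{-1})$ turns the key quadratic form into a one-line computation rather than a componentwise sum. Your first case also isolates the general principle that a kernel component protects all negative eigenvalues. The paper's compression route buys uniformity: because $P_Bv$ stays a product vector, the same determinant argument plus interlacing covers both of your cases at once, so your case split is not strictly needed.
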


\begin{proof}
Consider a pure entangled quantum state written in its Schmidt decomposition form after local transformations as 
\begin{equation}
    |\psi_0\rangle = \sum_{j = 0}^{s-1} \mu_j |jj\rangle \ \ \ ,
\label{eq:psi0Schimdt}
\end{equation}
with $\sum_{j=0}^{s-1}\mu_j^2=1$, $\mathbf{\mu} \in \mathbb{R}^s$, $\mu_j \geq 0 \ \forall j$,  and $s\geq 2$ for the state to be entangled. 
The two Hilbert spaces across the bipartition will be regarded as qudits each with dimension $d \geq s$.
The partially transposed density matrix has a clear set of eigenvalues and eigenvectors captured by the spectral decomposition 
\begin{multline}
\rho_0^\Gamma = \sum_{j = 0}^{s-1} \mu_j^2 |jj\rangle \langle jj|   + \\ \sum_{j = 0}^{s-2} \sum_{k = j+1}^{s-1}\mu_j \mu_k |\psi_{jk}^+\rangle \langle \psi_{jk}^+|  - \\ \sum_{j = 0}^{s-2} \sum_{k = j+1}^{s-1}\mu_j \mu_k |\psi_{jk}^-\rangle \langle \psi_{jk}^-| \ \ \ ,
\label{eq:spectraldecomp}
\end{multline}
where $|\psi_{jk}^\pm\rangle= \frac{|jk\rangle \pm |kj\rangle}{\sqrt{2}}$.
Constructing a unitary that diagonalizes $\rho_0^\Gamma$ in the order of positive, negative, and vanishing eigenvalues with content outside the $s^2$-dimensional subspace results in,
\begin{widetext}
\begin{equation}
    U \rho_0^\Gamma U^\dagger \equiv D =  \text{diag} \left( \overbrace{\mu_0^2, ..., \mu_{s-1}^2}^{\text{$s$}}, \right.  \overbrace{\mu_0\mu_1, ..., \mu_{s-2}\mu_{s-1}}^{\text{$s(s-1)/2$}}, \underbrace{-\mu_0\mu_1, ..., -\mu_{s-2}\mu_{s-1}}_{\text{$s(s-1)/2$}}, \left. \underbrace{0, ..., 0}_{\text{$d^2-s^2$}}\right) \ \ \ .
    \label{eq:ddelements}
\end{equation}
\end{widetext}
Clearly, leading principal submatices of $D$ with dimension $p$ have determinants that are positive for the first $p \leq s+\frac{s(s-1)}{2}$, alternating in sign for the next $\frac{s(s-1)}{2}$ values of $p$, then zero for leading principal submatices of dimension $p> s^2$.

The mixed ensemble of present focus, $\rho = \lambda_0 \rho_0 + \lambda_1 \rho_\otimes $, yields a PT density matrix
\begin{equation}
    \rho^\Gamma = \lambda_0 \rho_0^\Gamma +\lambda_1 \rho_\otimes^\Gamma 
\end{equation}
that adds a positive semidefinite (PSD) matrix to $\rho_0^\Gamma$.  
Beyond its PSD property, $\rho_\otimes^\Gamma$ for an arbitrary separable state in the $d^2$ two-qudit Hilbert space
\begin{equation}
    |\psi_{\otimes}\rangle = \left( \sum_{j = 0}^{d-1} a_j |j\rangle \right) \left( \sum_{k = 0}^{d-1} b_k |k\rangle \right)
\label{eq:psiotimes}
\end{equation}
is a PT density matrix that remains rank-1
\begin{equation}
  \rho_\otimes^\Gamma =   \left(|\psi_\otimes\rangle \langle \psi_\otimes|\right)^\Gamma = |\psi_\otimes^\Gamma\rangle \langle \psi_\otimes^\Gamma| \ \ \ , 
\end{equation}
governed by a single vector that is tensor product in the PT space
\begin{equation}
    |\psi_\otimes^\Gamma\rangle \equiv \left( \sum_{j = 0}^{d-1} a_j |j\rangle \right) \left( \sum_{k = 0}^{d-1} b_k^* |k\rangle \right) \ \ \ .
\label{eq:psiotimesGamma}
\end{equation}

Because unitarily equivalent matrices have equal spectra,  the PT eigenvalues can be analyzed by first transforming into the PT eigenbasis of the entangled state,  
\begin{equation}
    M \equiv U\rho^\Gamma U^\dagger = \lambda_0 D + \lambda_1 v v^\dagger
    \label{eq:UrhoGammaUdg}
\end{equation}
with 
\begin{widetext}
\begin{multline}
    v = U|\psi_\otimes^\Gamma\rangle =  \left( \overbrace{a_0 b_0^*, ...,  a_{s-1} b_{s-1}^*}^{\text{$s$}}, \right. \overbrace{\frac{a_0b_1^* + a_1 b_0^*}{\sqrt{2}}, ..., \frac{a_{s-2}b_{s-1}^* + a_{s-1} b_{s-2}^*}{\sqrt{2}}, }^{s(s-1)/2} \\
    \overbrace{\frac{a_0b_1^* - a_1 b_0^*}{\sqrt{2}}, ..., \frac{a_{s-2}b_{s-1}^* - a_{s-1} b_{s-2}^*}{\sqrt{2}}, }^{s(s-1)/2} 
    \overbrace{a_0b_s^*, a_s b_0^*, ..., a_{s-1} b_d^*, a_d b_{s-1}^*}^{2s(d-2)}, \left.
    \overbrace{a_{s}b_s^*, ..., a_d b_d^*}^{(d-s)^2} \right)^T  \ \ \ .
    \label{eq:vvelements}
\end{multline}
\end{widetext}
By design, the first $s+\frac{s(s-1)}{2}$ principle submatrices of $M$ are positive definite by virtue of the same in $\lambda_0D$ and the positivity of all principle submatrices of $\lambda_1 v v^\dagger$. 
To analyze the sign of the next $\frac{s(s-1)}{2}$ leading principal submatrix determinants, consider the complex matrix determinant lemma 
\begin{equation}
    \det \left( A+u w^\dagger\right) = \det \left( A\right) \left( 1 + w^\dagger A^{-1} u\right) \ \ \ ,
\end{equation}
where $A$ is the leading principal submatrix of $\frac{\lambda_0}{\lambda_1}D$ and $u=w$ is the corresponding leading segment of $v$.
Applying this to the mixture of Eq.~\eqref{eq:UrhoGammaUdg}
indicates that the oscillating signs in $D$ will be reproduced by $M$ if $\left(1+\frac{\lambda_1}{\lambda_0} v_p^\dagger D_p^{-1} v_p\right) \geq 0$  for  $p\leq s^2$, where again $p$ indicates the size of the leading principal submatrix
(Note that $(D_p)^{-1} = \left(D^{-1}\right)_p$ due to the diagonal structure of $D$).

Because $\frac{\lambda_1}{\lambda_0}$ can be arbitrarily large, the asserted positivity must apply to $v_p^\dagger D_p^{-1} v_p$ alone.  
By working in the PT eigenbasis of the entangled state, this inner product may be simply expressed as 
\begin{equation}
v_p^\dagger D_p^{-1} v_p = \sum_{j= 0}^{p} D^{-1}_{jj} |v_j|^2 \ \ \ ,
\label{eq:positiveReq}
\end{equation}
with the matrix and vector elements of Eq.~\eqref{eq:ddelements} and Eq.~\eqref{eq:vvelements}, respectively.
In this form, it is again clear that the first $s+\frac{s(s-1)}{2}$ leading principal submatrices will continue to be positive definite.
Focusing next on the leading principal submatrix of size $p = s^2$, the positivity of
\begin{equation}
    v_{s^2}^\dagger D_{s^2}^{-1} v_{s^2} = \frac{\left| \sum\limits_{j = 0}^{s-1} a_j b_j \prod\limits_{\stackrel{k=0}{k\neq j}}^{s-1} \mu_k\right|^2 }{ \left( \prod\limits_{\ell =0}^{s-1} \mu_\ell^2\right) }
    \label{eq:s2Req}
\end{equation}
shows that the number of negative PT eigenvalues in the $s^2$ subspace of $\rho_0$ and $\rho$ are either both even or both odd, i.e., $n_{-}\left(D_{s^2}\right) \text{mod}\ 2= n_{-}\left(M_{s^2}\right) \ \text{mod} \ 2  $. Due to the rank-1 property of $\rho_\otimes^\Gamma$ allowing the shift of the inertia to be maximally one~\cite{Marquesdesa1981143}, this observation is already sufficient to conclude invariance of the negativity rank, 
\begin{equation}
    n_- \left( \rho^\Gamma\right) = n_-\left(\rho_0^\Gamma\right)  = \frac{s(s-1)}{2} \\\ .
\end{equation}
This invariance is consistent with the observation that all negativity contributing terms of Eq.~\eqref{eq:spectraldecomp} have rank greater than one.

For an additional perspective from Eq.~\eqref{eq:positiveReq}, evaluating $v_p^\dagger D_p^{-1} v_p$ for leading principal submatrices of sizes $s+\frac{s(s-1)}{2}<p<s^2$ will only add strictly positive terms to the evaluation of Eq.~\eqref{eq:s2Req}, e.g., $\frac{\left|a_{s-2}b_{s-1}^* - a_{s-1} b_{s-2}^*\right|^2}{2 \mu_{s-1}\mu_{s-2}}$ for $p = s^2-1$.
This reflects that Eq.~\eqref{eq:s2Req} is positive while including all possible negative contributions.
Thus, the oscillatory signs of sequential principal determinants for dimensions $s+\frac{s(s-1)}{2}\leq p \leq s^2$ persists upon mixture of a single separable state.
According to the interlacing theorem, this  results in the number of negative PT eigenvalues remaining invariant up to dimension $s^2$. 
Because $v v^\dagger \in \text{PSD}$, the number of negative eigenvalues cannot increase via this mixture~\cite{Marquesdesa1981143}, concluding that the negativity rank is invariant for the full $d^2$ Hilbert space of the mixed ensemble including $|\psi_0\rangle$ of Schmidt rank $s$ and an arbitrary separable state.  
\end{proof}

\begin{theorem}\label{thm:maxNsep}
A mixed ensemble in the form of Eq.~\eqref{eq:mixture} composed of an entangled pure state $|\psi_0\rangle \in \text{NPT}$ with Schmidt rank $s$ and $N\leq \frac{s(s-1)}{2} = n_-\left(\rho_0^\Gamma\right)$ additional arbitrary pure separable states is NPT.
\end{theorem}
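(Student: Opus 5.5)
The plan is to combine Lemma~\ref{lem:key} with the inertia argument of Corollary~\ref{cor:rho0mixedApp}, spending the first separable state "for free." The case $N \leq \frac{s(s-1)}{2}-1$ is already covered by Theorem~\ref{thm:Theorem 1}, so only $N = n_-(\rho_0^\Gamma) = \frac{s(s-1)}{2}$ needs treatment. If $\lambda_0=0$ the ensemble is not of the intended form, so we assume $\lambda_0>0$. If some $\lambda_i=0$, the effective $N$ is smaller and the earlier theorem applies, so we may take all $\lambda_i>0$.

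The argument then proceeds in two steps. First, group $|\psi_0\rangle$ with the first separable state into the intermediate ensemble
\begin{equation}
\rho_1 = \frac{\lambda_0|\psi_0\rangle\langle\psi_0| + \lambda_1|\psi_1\rangle\langle\psi_1|}{\lambda_0+\lambda_1} .
\end{equation}
This is exactly the $N=1$ situation of Lemma~\ref{lem:key}, so $n_-(\rho_1^\Gamma) = n_-(\rho_0^\Gamma) = \frac{s(s-1)}{2}$ for any choice of $\lambda_0,\lambda_1>0$ and any separable $|\psi_1\rangle$. In particular $\rho_1$ is an NPT mixed ensemble with negativity rank $\frac{s(s-1)}{2}$. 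Second, write
\begin{equation}
\rho = (\lambda_0+\lambda_1)\rho_1 + \sum_{i=2}^{N}\lambda_i|\psi_i\rangle\langle\psi_i| ,
\end{equation}
which is $\rho_1$ mixed with $N-1 = n_-(\rho_1^\Gamma)-1$ additional pure separable states. Corollary~\ref{cor:rho0mixedApp} applies directly and gives $\rho\in\text{NPT}$. Equivalently, each $\rho_{\psi_i}^\Gamma$ is rank-1 PSD by Eq.~\eqref{eq:psiotimesGamma}, so each addition lowers the number of negative eigenvalues by at most one. This yields $n_-(\rho^\Gamma)\geq \frac{s(s-1)}{2}-(N-1)\geq 1$.

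The main point to verify is that Lemma~\ref{lem:key} really holds uniformly in the weight ratio $\lambda_1/\lambda_0$ and for every separable state, including states with support outside the $s\times s$ Schmidt block. The lemma's proof covers this through the matrix-determinant-lemma expression Eq.~\eqref{eq:s2Req}, which is nonnegative independently of $\lambda_1/\lambda_0$, together with the observation that components of $v$ outside the $s^2$ block only contribute positive terms.

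One degenerate case needs care: when Eq.~\eqref{eq:s2Req} vanishes, the parity argument could allow a zero eigenvalue rather than a negative one. Here the Weinstein--Aronszajn (rank-one interlacing) bound is the tool: adding a rank-1 PSD term can reduce $n_-$ by at most one. That reduction would require $\det M_{s^2}$ to change sign, and it does not, since $\det M_{s^2}=\det(\lambda_0 D_{s^2})\,(1+\text{nonneg})$ keeps the sign of $\det D_{s^2}$. I would expect confirming this step, that no negative eigenvalue is lost rather than merely that parity is preserved, to be the only delicate point. Everything after it is the routine inertia bookkeeping already used in Theorem~\ref{thm:Theorem 1}.
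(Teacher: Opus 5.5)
Your proposal is correct and follows the paper's proof: Lemma~\ref{lem:key} absorbs the first separable state while keeping the negativity rank at $\frac{s(s-1)}{2}$, and Corollary~\ref{cor:rho0mixedApp} then handles the remaining $\frac{s(s-1)}{2}-1$ states. The degenerate case you flag cannot occur, because $1+\frac{\lambda_1}{\lambda_0}v_{s^2}^\dagger D_{s^2}^{-1}v_{s^2}\geq 1$ means $\det M_{s^2}$ never vanishes, so no zero eigenvalue appears in the $s^2$ block, and the sign of the determinant together with the rank-one shift already fixes $n_-$.
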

\begin{proof}
   Lemma~\ref{lem:key} shows that a mixed ensemble of $|\psi_0\rangle\in \text{NPT}$ and one arbitrary pure separable state has a negativity spectrum that retains the entangled state's negativity rank, $\frac{s(s-1)}{2}$.  
By Corollary~\ref{cor:rho0mixedApp},  another $\frac{s(s-1)}{2}-1$ arbitrary pure separable states may be further included in the mixed ensemble while retaining its NPT classification. 
Therefore, an ensemble of $|\psi_0 \rangle$ and $\frac{s(s-1)}{2}$ arbitrary pure separable states will be NPT, extending by one the lowerbound on the maximal $N$ reported in Theorem~\ref{thm:Theorem 1} established in Ref.~\cite{Hua:2013sos}. 
\end{proof}
\subsection{Entangled Mixtures}
\label{sec:entmix}
The previous section considered the NPT classification of a mixture of arbitrary separable states with a single entangled state.
This section will explore the NPT classification when the mixture of Eq.~\eqref{eq:mixture} includes entangled states $|\psi_i\rangle \in \text{NPT}$.

Because the rank of positive components in the PT density matrix of an entangled state exceeds that of a separable state (rank-1), entangled states have greater capacity to reduce negativity through classical mixture. 
A simple example of this is an equal-probability mixture of two two-level Bell states with the same bit-parity, which results in a separable ensemble.
Therefore, when $s=2$, the minimum number of entangled states capable of eliminating the entanglement through classical mixture is one. 
More generally, this minimum number scales linearly with the Schmidt rank.
\begin{theorem} \label{thm:maxNent}
A mixed ensemble in the form of Eq.~\eqref{eq:mixture} composed of entangled pure state $|\psi_0\rangle$ with Schmidt rank $s$ and $N\leq s-2$ additional arbitrary entangled pure states is NPT.
\end{theorem}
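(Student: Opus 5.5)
The plan is a rank argument, not a spectral one. The result I will invoke is the rank inequality for PPT states of Horodecki, Lewenstein, Vidal and Cirac: every PPT ensemble $\rho$ satisfies $\operatorname{rank}(\rho)\geq \max\left[\operatorname{rank}(\rho_A),\operatorname{rank}(\rho_B)\right]$, where $\rho_{A,B}$ are the reduced density matrices. The theorem then follows by contraposition: it is enough to show that the mixture of Eq.~\eqref{eq:mixture} has global rank strictly smaller than its local rank. The separable/entangled distinction among the added states $|\psi_i\rangle$ never enters, which is why the bound on $N$ is set by the Schmidt rank $s$ rather than by $n_-\left(\rho_0^\Gamma\right)$.

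\textbf{Step 1 (global rank).} Since $\rho$ is a convex combination of $N+1$ pure projectors, $\operatorname{rank}(\rho)\leq N+1\leq s-1$.

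\textbf{Step 2 (local rank).} Take the partial trace of Eq.~\eqref{eq:mixture}: $\rho_A=\lambda_0\operatorname{tr}_B|\psi_0\rangle\langle\psi_0|+\sum_i\lambda_i\operatorname{tr}_B|\psi_i\rangle\langle\psi_i|$. This is a sum of positive semidefinite matrices, so its rank is at least that of any summand with positive weight. With $\lambda_0>0$, the Schmidt decomposition Eq.~\eqref{eq:psi0Schimdt} gives $\operatorname{rank}\left(\operatorname{tr}_B|\psi_0\rangle\langle\psi_0|\right)=s$, hence $\operatorname{rank}(\rho_A)\geq s$.

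\textbf{Step 3 (conclusion).} Combining the two steps gives $\operatorname{rank}(\rho)\leq s-1< s\leq\operatorname{rank}(\rho_A)$. This violates the PPT rank inequality, so $\rho\in\text{NPT}$. I would also remark that this is consistent with the $s=2$ Bell-mixture example above, where $N=1=s-1$ entangled states already suffice to reach PPT.

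\textbf{Main obstacle.} The only nontrivial ingredient is the PPT rank inequality itself. If a self-contained argument is wanted, I would first try a support argument. Suppose $\rho$ is PPT with range $R$ of dimension $r<\operatorname{rank}(\rho_A)$. Using the isomorphism $\phi$ of Section~\ref{sec:NPTsubspaces}, vectors in $R$ map to $d_B\times d_A$ matrices. I would then pick a local vector $|x\rangle_A$ in the support of $\rho_A$ that is orthogonal to the $A$-side images of enough range vectors. The aim is to produce a $2\times2$ principal submatrix of $\rho^\Gamma$ with a vanishing diagonal entry and a nonzero off-diagonal entry, in the style of Eq.~\eqref{eq: negdet}, which would contradict $\rho^\Gamma\succeq0$. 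Making this construction work for general ensembles, rather than only pure states, is where I expect the difficulty to lie. Failing that, I would simply cite the published inequality.
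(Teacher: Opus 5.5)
Your proof is correct and is essentially the paper's. The paper defers to the cited Horodecki et al.\ derivation, which is exactly this rank-deficiency argument: $\operatorname{rank}\rho\le N+1\le s-1<s\le\operatorname{rank}\rho_A$ forces a violation of the reduction criterion, and anything the reduction map detects the partial transpose detects too, so your direct appeal to the PPT rank inequality follows the same route. You do not need the $\phi$-based support construction for self-containment: PPT implies $\rho\le\rho_A\otimes\mathbb{I}$, and conjugating by $\rho_A^{-1/2}\otimes\mathbb{I}$ on the support of $\rho_A$ gives a positive operator of norm at most $1$ and trace $\operatorname{rank}\rho_A$, so its rank, which equals $\operatorname{rank}\rho$, is at least $\operatorname{rank}\rho_A$.
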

\noindent The derivations of Ref.~\cite{Horodecki:2001hcn} provide a proof of this theorem.  
While Ref.~\cite{Horodecki:2001hcn} focuses on the reduction map, all entanglement seen by the reduction map will be seen by the partial transpose map as well~\cite{johnston_ed}. 

A final question remains of whether or not the lowerbound of the maximal $N$ in Theorem~\ref{thm:maxNent} can be further raised, as was performed for the mixture of separable states via derivation of Theorem~\ref{thm:maxNsep}. 
In the following, we provide a constructive framework for entanglement elimination via mixture with $s-1$ entangled states to show that the bound of Theorem~\ref{thm:maxNent} is tight.

For an immediate counterexample showing that $N = s-1$ is capable of eliminating the entanglement for an entangled state of Schmidt rank $s$, consider the generalized Bell basis~\cite{PhysRevLett.70.1895}
\begin{equation}
    |\psi_{nm}\rangle=\sum_j \frac{e^{2\pi ijn/d}}{\sqrt{d}} |j\rangle \otimes |(j+m)  \text{mod}\ d \rangle 
\end{equation}
for $m,n \in {0, \ldots, d-1}$ where $d$ is the Hilbert space dimension on either side of the bipartition.  An equal-probability classical mixture of the set with $m=0$, 
\begin{equation}\label{eq:psin}
    |\psi_n\rangle=\sum_j \frac{e^{2\pi i jn/d}}{\sqrt{d}}|j\rangle \otimes |j\rangle  \ \ \ ,
\end{equation}
results in density matrix
\begin{equation}
\rho=\frac{1}{d}\sum^{d-1}_{n=0}\frac{1}{d}\sum_{j,k}e^{\frac{2\pi i}{d}n(j-k)}|jj\rangle\langle kk| \ \ \ .
\end{equation}
Notice that the density matrix element vanishes when $j$ is not equal to $k$ since $n$ indexes a complete set of roots of unity. However, when $k$ is equal to $j$, the sum results in a factor of $d$. Therefore, the final density matrix will be diagonal and thus PPT. 
The observation that these $d$ entangled pure states, each with Schmidt rank $s = d$, can form a PPT mixed state proves the rigidity of the lowerbound on the maximal $N$ of Theorem~\ref{thm:maxNent} by selecting $|\psi_0\rangle = |\psi_n\rangle$ for any $n$ in the set of Eq.~\eqref{eq:psin}. 

Beyond a single $N = s-1$ counterexample to rule out extension of Theorem~\ref{thm:maxNent}, the entanglement of any state can be eliminated upon mixture with $N = s-1$ specifically selected entangled states.
To see this constructively, let $|\psi_0\rangle$ be a pure state with a Schmidt decomposition of Eq.~\eqref{eq:psi0Schimdt} and density matrix
\begin{equation}
    \rho_0=\sum_i \mu^2_i|ii\rangle\langle ii|+ \sum_{i \neq j} \mu_i\mu_j|ii\rangle\langle jj| \ \ \ .
\end{equation}
The present strategy will be to cancel the off-diagonal terms through mixture, leaving the density matrix diagonal and thus PPT. 
This can be achieved by selecting states 
\begin{equation}
	|\bar{\psi}_i\rangle = \left(-\mu_i |ii\rangle + \sum_{j = i+1}^{s-1} \mu_j |jj\rangle\right)/\gamma_i \ \ \ ,
\end{equation}
where $\gamma_i$ is the normalization factor for $i \in \{ 0, \ldots, s-2\}$, and creating a mixture 
\begin{equation}
\rho = w_0 \rho_0 + \sum_{j=0}^{s-2} w_j |\bar{\psi}_j\rangle \langle \bar{\psi}_j| 
\end{equation}  
with weights $\vec{w} = \left\{ 2^j w_0 \gamma_j^2\right\}_{j = 0, \ldots, s-2}$, where $w_0$ is determined by the unit probability constraint $w_0 + \sum_{j=0}^{s-2} w_j \equiv 1$.
In this procedure, each sequentially added state $|\bar{\psi}_i\rangle$ removes all off-diagonal terms at matrix element locations of $|ii\rangle\langle jj|$ and $|jj\rangle\langle ii|$ for all $j$.
As a constructive strategy of mixing $s-1$  entangled states to produce a PPT mixture, this procedure proves that the boundary in Theorem~\ref{thm:maxNent} is tight for an arbitrary $|\psi_0\rangle$ entangled state. 

It is interesting to momentarily observe that at each step of this procedure, the number of negative eigenvalues in the partial transpose of the mixture will decrease at a rate governed by the maximum Schmidt rank of pure states at diminishing dimensions, i.e., along the tower of $\frac{d(d-1)}{2}$.  
Note that the negative eigenvalues of a mixture can also decrease at an initially faster rate than this by an alternate selection of states. 
For example, the maximally entangled state with Schmidt spectrum $\vec{\mu} = 1/\sqrt{d}$ in even dimensions with $d>2$ mixed in equal probability with a state in which half the amplitudes have been negated results in a density matrix with only $2\binom{d/2}{2} < \frac{(d-1)(d-2)}{2}$ negative eigenvalues.  

\section{Discussion}

Retaining quality quantum correlations permeating quantum devices as they interact with their experimental environments is crucial for quantum information processing.   
In this work, we consider the qudit-qudit negativity rank as a source of insight into the robustness of negativity and inseparability with respect to noise channels at the two extremes of diluting an ensemble with individual or a uniform distribution of quantum states.

In the former case~(\S\ref{sec:sepmix}), the negativity rank of an entangled pure state is shown to be exactly the maximum number of arbitrary separable states that can be included in arbitrary mixture while the ensemble remains NPT.    
This result achieves a tight bound that determines the maximum by proving that the negativity rank of a pure state is invariant upon arbitrary mixture with any separable state. 
Though demonstrated for a narrowly defined quantum channel, the structural incompatibility that produces these invariances of negativity rank or NPT classification may yield further opportunities for embedding quantum information into subspaces that retain entanglement amid logical noise channels.

Note that this result does not imply that there will always exist a mixture of $\left(n_-+1\right)$ separable states resulting in a PPT ensemble. 
While this is tempting to conclude given that the transpose exhibits the maximal entangled subspace dimension~\cite{PhysRevA.87.054301, PhysRevA.87.064302}(\S\ref{sec:NPTsubspaces}), corresponding to its strength in detecting entanglement~\cite{Johnston:2019muf},
the proof of Section~\ref{sec:sepmix} does not rule out the existence of particular NPT states that may have greater resilience towards separable mixture than presently proven for general NPT states. 
Continued study of negativities corresponding to PPT-sensitive PnCP maps~\cite{CHOI197595, Breuer:2006azi, Hall_2006,  Miller:2015hqi, Buckley:2021isn, Masse:2026vps} will further contribute to understanding entanglement classes in terms of their noise robustness.

In the latter case~(\S\ref{sec:negativityDecay}), the negativity rank is found to set the rate of negativity decay under global depolarizing noise.
By identifying a state with the slowest depolarized negativity decay~(\S\ref{sec:APPTboundary}), this rate has delivered an intuitive rederivation~\cite{Arunachalam:2015ywm} of the APPT boundary for ensembles radially connected to pure states on the Bloch surface (a set represented modulo unitary transformations by the isotropic states~\cite{Horodecki:1997av}).  
Because such ensembles have equivalent APPT and ASEP subspaces~\cite{Arunachalam:2015ywm}, this connection provides a succinct calculation---for a set of ensembles to which quantum computations can be isolated via Clifford twirling~\cite{Zyczkowski:2005jvo, Dankert:2009yux}---of the subspace in which unitaries can no longer generate entanglement.
In practice, this role of the negativity rank may inform decisions whether to algorithmically depolarize quantum noise, for example, when simulating spacelike observables in lattice quantum fields where the negativity rank systematically increases as the continuum limit is approached~\cite{Klco:2021cxq,Gao:2024esb}.

The present depolarizing noise analysis also generates an analytic derivation for the maximum purity of a qutrit-qutrit APPT ensemble~(\S\ref{sec:maxPurityAPPT33}).
This result confirms previous numerical calculations~\cite{Dung:2025xjg}, proves the qutrit-qutrit case of a recent conjecture~\cite{Ahiable:2026how}, and is observed to coincide with the maximum purity realized by a qutrit-qutrit ASEP ensemble. Though not conclusive, this observation further supports the possibility that the APPT and ASEP subspaces may be equivalent beyond the qubit-qudit systems currently proven~\cite{Johnston:2013wey, Arunachalam:2015ywm, Abellanet-Vidal:2024jnc, Song:2024svl}.
Combining the upper bound of Theorem~\ref{thm: theoremII.6} with the lower bound of Ref.~\cite{Gurvits:2002yud} provides additional constraints---the purity of a qutrit-qutrit APPT bound entangled ensemble (if it exists) must be between $\frac{1}{8}$ and $\frac{17}{121}$---that can be employed in optimization procedures in search of ensembles that would distinguish the APPT and ASEP subspaces.
In fact, this tight purity upperbound is found to be sufficient to eliminate some recent PPT-sensitive entanglement witnesses from their candidacy of being sensitive to APPT entanglement.
More broadly, these applications illustrate one role of negativity rank analyses for entanglement detection and structural classification in addition to characterizing susceptibility to noise.

Given the multifaceted value of the analytic APPT purity bound, it would be of interest to extend the derivation to higher dimensions.  
Unfortunately, the proof provided in Section~\ref{sec:maxPurityAPPT33} relies on a necessary and sufficient criterion for APPT classification based on linear matrix inequalities~\cite{Hildebrand:2005eni} that experiences rapid growth in complexity when extended to higher dimensions~\cite{OEIS_A237749, Hildebrand:2005eni}.
Despite this challenge, it may be prudent to explore whether components of the proof, e.g., the limited rank available for inequality violation established in Theorem~\ref{thm: negeigL}, persists for the increasing number of corresponding inequalities in qudit-qudit systems with $d >3$.

The entanglement structures exhibited by quantum computations as they evolve are naturally algorithm dependent. 
For quantum field simulations, this includes dependency not only on state preparation and time evolution strategies, but also on fundamental decisions in the representation of a quantum field on quantum hardware, e.g., how the role of entanglement in generating dynamics that satisfy Gauss's law in lattice gauge theories changes when utilizing a basis with explicit or integrated projection quantum numbers~\cite{Ciavarella:2022zhe}.
In order to retain the entanglement necessary for completing their information processing tasks, quantum computations may benefit from designing intermediate entanglement structure to be robust to prevailing noise.
For distillable entanglement in particular, the present analyses indicate that the negativity rank provides guidance for defining specifications of strategically tailored entanglement. 
As our understanding of entanglement properties in the context of noise advances, opportunities to extend the scientific reach of practical quantum devices arise in the form of incorporating entanglement structure into quantum simulation designs.

\begin{acknowledgments}
The authors thank Kenneth R. Brown and Iman Marvian for discussions.  This work was supported by National Science Foundation CAREER award number PHY-2541830.
\end{acknowledgments}

\appendix
\section{Proof of Theorem II.2}
\label{app: Proof of Theorem II.2}
In this section, we provide a proof of Theorem~\ref{thm: negeigL}, that $L_{1,2}$ of Eq.~\eqref{eq:APPTcriteria} have maximally one negative eigenvalue each.  
Given that $\rho$ is PSD and the vector $v = \frac{1}{\sqrt{2}} \begin{pmatrix}
1,-1,0
\end{pmatrix}$ yields $v^T L_{1,2} v>0$ for $\rho$ of any rank (1 to 9), the possibility of $L_{1,2}$ being negative definite or negative semidefinite is immediately excluded.  
Thus, the remaining content of the proof will focus on also excluding the presence of two negative eigenvalues.

Observe that the sparsity of $L_{1,2}$ depends on the rank of the density matrix. For each rank, the number of negative eigenvalues in $L_{1,2}$ can be determined from either exact calculation or evaluation of determinants.
\paragraph{Rank 1}
This corresponds to pure states with an eigenspectrum of $\vec{\lambda} = \left\{1, 0, \ldots, 0\right\}$. From direct calculation, $L_{1,2}$ each have a single negative eigenvalue, $\mu_-=-1$. 
\paragraph{Rank 2}
From direct calculation, the eigenvalues of $L_1$ and $L_2$ are $\mu={\pm \sqrt{\lambda_1^2+\lambda_2^2}, 0}$. 
\paragraph{Rank 3-7}
By direct calculation, the determinants of $L_{1,2}$ are term-by-term negative and non-vanishing, as demonstrated in Table~\ref{tab:L12-common} where $L_1 = L_2$ and Table~\ref{tab:L12-different}.
\begin{table}[h]
\centering
\small
\renewcommand{\arraystretch}{1.15}
\begin{tabular}{cc}
\toprule
\textbf{Rank} & $\det(L_1)=\det(L_2)$ \\
\midrule
3 &
$-2\lambda_1\lambda_2\lambda_3$  \\

4 &
$-2\lambda_1\lambda_2\lambda_3-2\lambda_1^2\lambda_4$  \\

5 &
$-2\lambda_1(\lambda_1 \lambda_4+\lambda_2(\lambda_3-\lambda_5))$  \\
\bottomrule
\end{tabular}
\caption{Determinants of $L_{1,2}$ for ranks 3-5.}
\label{tab:L12-common}
\end{table}

\begin{table}[t]
\centering
\scriptsize
\renewcommand{\arraystretch}{0.6}
\setlength{\tabcolsep}{3pt}

\begin{tabular}{c|c|c}
\toprule
\textbf{Rank} & 6 & 7 \\
\midrule

$\det L_1$
&
$\begin{aligned}[t]
&-2\bigl(\lambda_1^2\lambda_4 \\
&+\lambda_1(\lambda_3-\lambda_5)
   (\lambda_2-\lambda_6)\bigr)
\end{aligned}$
&
$\begin{aligned}[t]
&-2\bigl(\lambda_1^2\lambda_4 \\
&+\lambda_1(\lambda_3-\lambda_5)
   (\lambda_2-\lambda_6)\\
&+(\lambda_2-\lambda_6)^2\lambda_7\bigr)
\end{aligned}$
\\
\midrule

$\det L_2$
&
$\begin{aligned}[t]
&-2\bigl(\lambda_1^2\lambda_4 \\
&+\lambda_1\lambda_2
   (\lambda_3-\lambda_5)\\
&+2\lambda_2^2\lambda_6\bigr)
\end{aligned}$
&
$\begin{aligned}[t]
&-2\bigl(\lambda_1^2\lambda_4 \\
&+\lambda_1(\lambda_3-\lambda_5)
   (\lambda_2-\lambda_7)\\
&+\lambda_6(\lambda_2-\lambda_7)^2\bigr)
\end{aligned}$
\\
\bottomrule
\end{tabular}

\caption{Determinants of $L_1$ and $L_2$ for ranks 6-7.}
\label{tab:L12-different}
\end{table}
\noindent
Since each determinant is negative and $L_{1,2} \nleq 0$, $L_{1,2}$ corresponding to a density matrix with rank between 3-7 must have exactly one negative eigenvalue.
\paragraph{Rank 8}

The determinants of $L_{1,2}$ are negative with the exception of when the spectrum is flat $\lambda_{1, \ldots, 8} = 1/8$ and the determinant vanishes.  In this special case, the eigenvalues of $\vec{\mu} = \{\frac{1}{4},\frac{1}{4},0\}$. Thus, if there exists a negative eigenvalue, there can be maximally one.

\paragraph{Rank 9}
The flips in the sign sequence of leading principal minor determinants~\cite{Horn_Johnson_1985} can be used to rule out the presence of two negative eigenvalues. To begin, the first one-dimensional principal minor is positive as $\lambda_9 > 0$. Because $L_{1,2}$ is observed to necessarily have negative determinant if its two-dimensional leading principal minor has negative determinant, the $\{+,-,+\}$ sign sequence never arises, thus excluding the presence of two negative eigenvalues.

\emph{Rank 9 (Alternative)} Let us focus on $L_1$ to describe an alternate proof for rank 9, whose logic applies also to $L_2$.

Notice that $L_1$ is real, symmetric, and can be separated as $L_1 = A-B$ where $A = \text{diag}\left(L_1\right)$ and $-B$ is the traceless matrix of off-diagonal elements.
Since the set of $\lambda$ eigenvalues are ordered, all elements of $A$ and $B$ are nonnegative.
Because $A$ is nonsingular for rank 9, $D = A^{1/2}$ can be used to decompose 
\begin{equation}
  L_1 = D(\mathbb{I} - C)D \ \ \ ,
\end{equation}
where $B = DCD$. Due to the vanishing diagonal of $B$ and the diagonal structure of $D$, the matrix $C$ will continue to be traceless and symmetric. By Sylvester's law of inertia~\cite{Sylvester01081852}, $(\mathbb{I}-C)$ and $L_1$ will have the same number of negative eigenvalues, so this proof may use $\left(\mathbb{I}-C\right)$ as a proxy for $L_1$.

Let $\omega_1 \geq \omega_2 \geq \omega_3$ be the ordered eigenvalues of $C$.  Since $C$ is populated with entirely nonnegative entries, the Perron-Frobenius theorem~\cite{perron1907grundlagen, frobenius1912matrizen} ensures that the largest (possibly degenerate) eigenvalue is nonnegative, $\omega_1\geq0$.  Because $C$ is also traceless, a vanishing $\omega_1$ would require nonpositive $\omega_{1,2}$ to satisfy $\omega_2+ \omega_3  =0$, and thus $C=0$ and $\vec{\mu} = 2\left\{\lambda_9, \lambda_7, \lambda_4\right\}$ are all positive.  In the alternate case that $\omega_1>0$, the vanishing trace leads to the requirement that $\omega_2 + \omega_3 <0$.    One can also verify that the determinant of $C$ is always nonnegative, leading to $\omega_2\omega_3\geq0$.  Combining these constraints results in $\omega_3<0$ and $\omega_2 \leq 0$.  Thus, $\omega_1$ is the only eigenvalue that can be positive, yielding maximally one negative eigenvalue of $(\mathbb{I}-C)$ as $1-\omega_1$.  Therefore, $L_1$ will also have at most one negative eigenvalue.

Thus we have proven for all possible cases that $L_1$ and $L_2$ can have at most one negative eigenvalue.

\bibliography{biblio}

\end{document}